\newcommand{\com}[1]{{\color{blue}#1}} 
\newcommand{\com}[1]{}
\newtheorem{theorem}{\textbf{Theorem}}
\newtheorem{proof}{Proof}
\newcommand{\RNum}[1]{\uppercase\expandafter{\romannumeral #1\relax}}
\begin{document}
\title{Intelligent Channel Allocation for IEEE 802.11be Multi-Link Operation: When MAB Meets LLM}

\author{Shumin Lian, Jingwen Tong,~\IEEEmembership{Member,~IEEE,} Jun Zhang,~\IEEEmembership{Fellow,~IEEE,} and Liqun Fu,~\IEEEmembership{Senior Member,~IEEE}
\thanks{
This work was presented in part at IEEE WCNC 2025 \cite{lian2025dynamic}. Shumin Lian and Liqun Fu are with the School of Informatics, Xiamen University, Xiamen 361005, China (e-mails: smlian@stu.xmu.edu.cn; liqun@xmu.edu.cn). 
Jingwen Tong and Jun Zhang are with the Department of Electronic and Computer Engineering, The Hong Kong University of Science and Technology, Kowloon, Hong Kong (e-mails: eejwentong@ust.hk; eejzhang@ust.hk). The corresponding author is Liqun Fu.
}
}



\maketitle
\thispagestyle{empty}

\begin{abstract}
WiFi networks have achieved remarkable success in enabling seamless communication and data exchange worldwide. The IEEE 802.11be standard, known as WiFi 7, introduces Multi-Link Operation (MLO), a groundbreaking feature that enables devices to establish multiple simultaneous connections across different bands and channels. While MLO promises substantial improvements in network throughput and latency reduction, it presents significant challenges in channel allocation, particularly in dense network environments.
Current research has predominantly focused on performance analysis and throughput optimization within static WiFi 7 network configurations. In contrast, this paper addresses the dynamic channel allocation problem in dense WiFi 7 networks with MLO capabilities. We formulate this challenge as a combinatorial optimization problem, leveraging a novel network performance analysis mechanism. Given the inherent lack of prior network information, we model the problem within a Multi-Armed Bandit (MAB) framework to enable online learning of optimal channel allocations. Our proposed Best-Arm Identification-enabled Monte Carlo Tree Search (BAI-MCTS) algorithm includes rigorous theoretical analysis, providing upper bounds for both sample complexity and error probability. To further reduce sample complexity and enhance generalizability across diverse network scenarios, we put forth LLM-BAI-MCTS, an intelligent algorithm for the dynamic channel allocation problem by integrating the Large Language Model (LLM) into the BAI-MCTS algorithm. 
Numerical results demonstrate that the BAI-MCTS algorithm achieves a convergence rate approximately $50.44\%$ faster than the state-of-the-art algorithms when reaching $98\%$ of the optimal value. Notably, the convergence rate of the LLM-BAI-MCTS algorithm increases by over $63.32\%$ in dense networks. The code is available at \url{https://github.com/Lianshumin576/MLOandMCTS}.
\end{abstract}

\begin{IEEEkeywords}
Multi-link operation, multi-armed bandit, best-arm identification, Monte Carlo tree search, large language model.
\end{IEEEkeywords}

\section{Introduction}
WiFi represents the world's most prevalent distributed wireless networking technology, managing approximately $70\%$ of global Internet traffic while supporting critical services, economic activities, and everyday conveniences \cite{crow1997ieee}. As wireless demands from emerging applications such as augmented reality, cloud gaming, and holographic communications continue to surge, the recently standardized IEEE 802.11be, known as WiFi 7, promises substantial enhancements in network throughput and latency reduction \cite{deng2020ieee}. WiFi 7 introduces transformative features including 320 MHz channel bandwidth, 4K-quadrature amplitude modulation, Multi-Link Operation (MLO), and enhanced Quality-of-Service (QoS) management \cite{chen2022overview}. Among these innovations, MLO emerges as a particularly promising technology, enabling STAtions (STAs) and Access Points (APs) to operate simultaneously across multiple channels and frequency bands, including 2.4 GHz, 5 GHz, and 6 GHz, thereby significantly boosting network throughput \cite{carrascosa2023wi}.

Despite MLO's considerable potential for enhancing network performance and reducing latency, efficient performance analysis and channel allocation in dynamic, dense WiFi 7 environments remain insufficiently explored, necessitating innovative approaches. MLO demonstrates distinct advantages over conventional techniques such as channel bonding \cite{deek2011impact} and multi-WiFi acceleration \cite{saputra2013aggregate}. Unlike channel bonding, which is constrained to adjacent channels, MLO enables connections across disparate frequency bands. Furthermore, MLO performs data aggregation at the Media Access Control (MAC) layer, contrasting with the higher-layer data management of multi-WiFi acceleration. These unique characteristics underscore the necessity for novel performance analysis methodologies tailored to WiFi 7 networks. In addition, the evolution toward increasingly dense WiFi deployments presents new challenges. The simultaneous data transmissions enabled by MLO's dual- or tri-band radio capabilities frequently result in overlapping transmission collisions in dense networks, creating an urgent need for efficient dynamic channel allocation algorithms.

Current research predominantly addresses performance analysis and throughput optimization within static WiFi 7 network configurations \cite{lopez2022multi, bellalta2023delay}. However, these approaches overlook critical challenges in multi-AP deployments, where intensified channel contention and cross-link interference significantly impede MLO's effectiveness. The emergence of centralized control frameworks addresses these operational gaps by enabling comprehensive resource orchestration, essential for realizing MLO's full potential in dynamic environments. Recent advances in digital and hardware technologies have made such frameworks viable for consumer-grade applications. Zhang et al. \cite{zhang2024ieee} examined AP-STA pairing and link allocation in centrally coordinated WiFi 7 networks, proposing a proportional fairness algorithm. However, their optimization framework's reliance on static network assumptions, including deterministic channel allocation and periodic updates based on averaged data rates, results in substantial performance degradation in dynamic scenarios. This limitation emphasizes the critical need for dynamic methodologies incorporating online learning and real-time parameter adaptation.

\begin{figure}[!t]
\centerline{\includegraphics[scale=0.6]{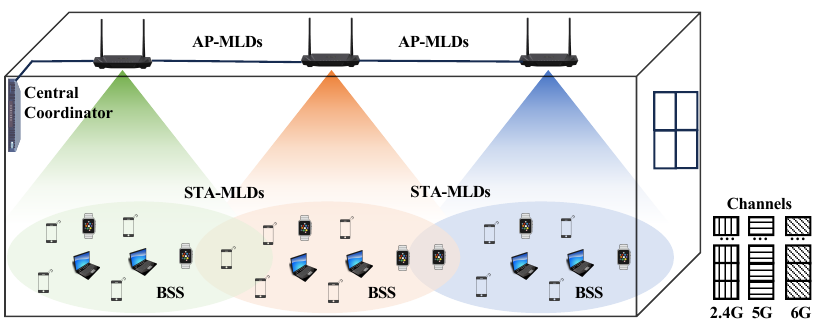}}
\caption{A dense WiFi 7 network comprising three Basic Service Sets (BSSs). Multi-Link Devices (MLDs) operate concurrently across multiple links spanning different channels and frequency bands (2.4, 5, and 6 GHz).}
\label{scenario}
\end{figure}

This paper investigates the dynamic channel allocation problem in dense WiFi 7 networks with MLO, as illustrated in Fig.~\ref{scenario}. We introduce a Continuous-Time Reversible Markov (CTRM) model \cite{jiang2009distributed} to accurately compute WiFi 7 network throughput by capturing the essential characteristics of the Carrier Sense Multiple Access (CSMA) protocol. Building upon this analytical framework, we formulate the channel allocation challenge as a combinatorial optimization problem. However, direct solution approaches face significant obstacles due to limited prior network information and the vast number of potential MLO configurations. Constructing comprehensive carrier-sensing graphs proves computationally prohibitive, necessitating efficient online learning algorithms to evaluate feasible solutions. Moreover, the solution space expands exponentially with varying MLO configurations across different AP-STA pairs.

To address these challenges, we reformulate the channel allocation problem within a Multi-Armed Bandit (MAB) framework, enabling online learning of optimal allocation strategies through efficient exploitation-exploration balancing \cite{bubeck2012regret}. Nevertheless, computational complexity remains a significant concern in MAB problems, particularly for latency-sensitive wireless applications. This complexity stems from two primary sources: the requirement for multiple explorations per arm to accurately estimate values, and the exponentially growing arm space. For instance, a network with six AP-STA pairs offering seven configurations each generates an arm space of $7^6 = 117,649$ possibilities. To enhance sample efficiency, we propose the Best-Arm Identification-enabled Monte Carlo Tree Search (BAI-MCTS) algorithm, synergistically combining MCTS and BAI techniques. MCTS excels in high-dimensional spaces by integrating Monte Carlo simulation with expansion mechanisms, constructing asymmetric search trees while efficiently balancing exploration and exploitation. To address MCTS's heuristic limitations and ensure solution quality, we incorporate BAI algorithms to provide guaranteed optimal channel allocations.

The vast arm space in MAB problems for dense WiFi networks presents an additional challenge. Recent breakthroughs in Large Language Models (LLMs) have catalyzed a paradigm shift in wireless network applications \cite{shao2024wirelessllm}, inspiring us to leverage LLMs for generating effective initial allocations in the BAI-MCTS algorithm, thereby reducing the exploration space. LLMs offer two key advantages: First, they rapidly integrate domain knowledge through In-Context Learning (ICL) \cite{dong2022survey}, enhancing algorithm generalization. Solutions from small-scale combinatorial optimization scenarios serve as exemplars, enabling LLMs to extract problem structures and generalize to larger-scale deployments, effectively bridging optimization-based and data-driven methodologies. Second, LLMs comprehend channel allocation procedures and conflict relationships among APs and STAs through Chain-of-Thought (CoT) reasoning \cite{chu2023survey}, facilitated by carefully designed prompt chains and task descriptions. These capabilities motivate our development of the LLM-BAI-MCTS algorithm for rapid, intelligent channel allocation in dense WiFi 7 networks.

The principal contributions of this work are as follows:
\begin{itemize}
    \item We address the dynamic channel allocation problem in dense WiFi 7 networks with MLO, introducing the CTRM model for network throughput calculation and formulating it as a combinatorial optimization problem. We subsequently model this as an MAB problem to facilitate online learning of optimal channel allocations.
    \item We develop the BAI-MCTS algorithm, which achieves an efficient exploitation-exploration balance. We derive theoretical upper bounds for both sample complexity and error probability, proving that BAI-MCTS converges to an $\epsilon$-optimal policy with probability exceeding $1-\delta$, where $\epsilon$ and $\delta$ represent small positive constants.
    \item To address large arm spaces in MAB problems, we introduce LLM-BAI-MCTS, an intelligent, LLM-assisted variant that significantly reduces the arm space while enhancing generalizability. This algorithm leverages CoT and ICL methods to transfer problem structures from small-scale to large-scale scenarios.
    \item Numerical evaluations demonstrate that BAI-MCTS achieves approximately $50.44\%$ faster convergence than the Dirichlet-Normal Gamma MCTS (DNG-MCTS) algorithm when reaching $98\%$ of optimal performance. Furthermore, LLM-BAI-MCTS exhibits convergence rates approximately $63.32\%$ faster than BAI-MCTS in dense network environments.
\end{itemize}

The paper is structured as follows: Section~\ref{SecRW} reviews related work. Section~\ref{SecSM} presents the system model, while Section~\ref{SecPF} formulates the dynamic channel allocation problem in WiFi 7 networks with MLO. The BAI-MCTS algorithm is detailed in Section~\ref{SecBA}, with its enhanced variant, LLM-BAI-MCTS, described in Section~\ref{SecLLM}. Section~\ref{SecNA} provides comprehensive algorithm evaluations. Finally, Section~\ref{SecCon} presents conclusions and future directions. For reference, Table~\ref{tab: Notation} summarizes the primary notation used throughout this paper.

\section{Related Work}\label{SecRW}
WiFi throughput analysis predominantly employs Bianchi and Markov models as foundational frameworks. Notable contributions by \cite{tong2021throughput,jiang2009distributed} demonstrate that constructing CSMA Markov chains and jointly analyzing steady-state probabilities with transmission rates enables analytical derivation of per-link throughput, which can be aggregated to comprehensively assess network capacity. However, extending these established models to accommodate MLO scenarios in WiFi 7 networks remains a largely underexplored research area.

Research on WiFi 7 networks with MLO has primarily bifurcated into two complementary domains: performance analysis and throughput optimization. The performance analysis domain encompasses comprehensive investigations of network behavior, including the coexistence dynamics between traditional Single-Link Devices (SLDs) and Multi-Link Devices (MLDs) \cite{lopez2022multi}, systematic exploration of novel features introduced in WiFi 7 \cite{chen2022overview}, and comparative evaluations of Single-Link Operation (SLO) versus MLO across diverse scenarios \cite{carrascosa2022experimental}. The throughput optimization domain concentrates on developing efficient transmission mechanisms. For instance, \cite{park2024adaptive} and \cite{zhang2024wifi} investigated optimal back-off window size design strategies to maximize network throughput. However, these contributions primarily target single AP configurations, limiting their applicability in complex multi-AP deployments.

Recent research endeavors have progressively shifted toward optimizing channel allocation through multi-AP collaboration frameworks. Ref. \cite{iturria2023channel} proposed an innovative parallel transfer reinforcement learning algorithm incorporating optimistic-weighted value decomposition networks to optimize channel allocation for MLO. Similarly, \cite{zhang2024ieee} introduced a sophisticated data-driven resource allocation algorithm, assisted by an AP controller, to maximize network throughput while maintaining fairness constraints. Furthermore, \cite{ali2023federated} developed a federated reinforcement learning framework for link activation, enabling neighboring Basic Service Sets (BSSs) to collaboratively learn optimal link allocation strategies for MLO. Nevertheless, these approaches predominantly address static WiFi 7 network configurations, neglecting the inherently dynamic nature of real-world channel conditions. Moreover, they inadequately address computational efficiency considerations, which prove critical for practical deployments. Consequently, developing efficient dynamic channel allocation methodologies for dense WiFi 7 networks with MLO remains an open research challenge.

The investigation of combinatorial optimization problems proves fundamental for designing and managing wireless networks, addressing critical challenges including channel allocation, network routing, and interference mitigation \cite{vesselinova2020learning}. However, solving these problems within wireless network contexts presents formidable challenges due to their inherent complexity, dynamic characteristics, and stringent low-latency requirements \cite{letaief2019roadmap}. Early research typically assumed complete network information availability and perfect solvability of combinatorial optimization problems from purely optimization perspectives \cite{lin2006tutorial}. As networks have evolved toward greater complexity and heterogeneity, researchers have increasingly adopted data-driven methodologies to learn optimal solutions \cite{zhang2019deep}. Nevertheless, these approaches typically demand substantial training data volumes, rendering them impractical for numerous real-world scenarios.
\begin{table}[t]   
  \centering  
  \caption{Main Notations} 
  \label{tab: Notation} 
  \begin{tabular}{cp{6.4cm}} 
    \hline 
    \noalign{\hrule height 0.6pt} 
    Notation & Description \\  
    \hline 
    $M$ & The number of BSSs or APs in the network \\  
    $N$ & The number of STAs in the network \\  
    $N_m$ & The number of STAs in the $m$-th BSS \\ 
    $\mathcal{T}(n,m)$ & The throughput of AP$_m$-STA$_n$ pair \\
    $L_t{(n,m)}$ & The selected MLO configuration of AP$_m$-STA$_n$ pair at time slot $t$ \\
    $\mathcal{L}(n,m)$ & All feasible MLO configurations of AP$_m$-STA$_n$ pair \\
    $\mathcal{J}(m)$ & All feasible MLO configurations of STAs connected to AP$_m$ \\
    $\mathcal{C}(d)$ & The set of child nodes of node $d$  \\
    $\hat{\mu}_{d}/{\mu}_{d}$ & The empirical/mean reward of node $d$  \\
    $\hat{\mu}_{d,d_c}/{\mu}_{d,d_c}$ & The empirical/mean reward of node $d$'s child node $d_c$  \\
    $\boldsymbol{w}_{d}^{*}$/$\boldsymbol{\mu}_{d}$ & The optimal allocation/mean reward vector of $\mathcal{C}(d)$\\
    $\Delta_d$ & The suboptimal gap of arm $d$ in the MAB problem \\
    $B_d$ & The empirical best leader among $\mathcal{C}(d)$ \\
    $O_d$ & The transportation cost challenger among $\mathcal{C}(d)$ \\
    $\eta$ & The indicator for the converged layer of the MCT \\
    $\mathcal{A}_{\epsilon}(\boldsymbol{\mu})$ & The $\epsilon$-optimal arms set in terms of $\mathcal{A}(\boldsymbol{\mu})$ with $\boldsymbol{\mu}$\\
    $q^h$ & The $q$-th node at layer $h$ \\
    $T_{\epsilon}^{h}(\boldsymbol{\mu}_{q^h})$ & The asymptotic characteristic time of the $q$-th node at layer $h$ with the stopping rule $\text{GLR}_\epsilon$\\ 
    $T_{\epsilon,\beta}^{h}({\boldsymbol{\mu}_{q^h}},d)$ & The $\beta$-characteristic time of the $q$-th node at layer $h$ with the stopping rule $\text{GLR}_\epsilon$ of arm $d$ \\ 
    $\tau_{\epsilon,\delta}^{h}(q)$ & The convergence time (of the $q$-th node) at layer $h$ with the stopping rule $\text{GLR}_\epsilon$ \\
    $\mathbb{N}_{t,d}$ & The number of times node $d$ is selected at time slot $t$ \\
    $d_{t,h}^*$ & The best arm at layer $h$ following the converged nodes at time slot $t$ \\
    $C_{\boldsymbol{\mu}}$ & The number of equivalence classes, $C_{\boldsymbol{\mu}}:=|\{\mu_d\mid d\in \mathcal{A}(\boldsymbol{\mu})\}|$ \\
    $\mathscr{C}_{\boldsymbol{\mu}}(d)$ & Equivalence class, $\mathscr{C}_{\boldsymbol{\mu}}(d):=\{d\in \mathcal{A}(\boldsymbol{\mu})\mid\mu_{d^*}-\mu_d=\Delta_d\}$ \\
    $\hat{\mathcal{I}}_t$ & The arm with the largest empirical reward at time slot $t$ \\
    $d_{t,h}$ & The node selected at layer $h$ at time slot $t$\\
    \hline  
    \noalign{\hrule height 0.6pt} 
  \end{tabular}  
\end{table}

MAB frameworks provide efficient mechanisms for addressing sequential decision-making problems through balanced exploration-exploitation strategies \cite{bubeck2012regret}. These frameworks find extensive application in reinforcement learning problems \cite{liu2024combinatorial} and wireless network optimization \cite{guo2024fair}, effectively managing uncertainty and dynamic environmental conditions. However, sample complexity remains a persistent challenge in MAB problems, particularly when confronting large arm spaces. To mitigate this challenge, existing approaches either decompose problems into multiple sub-problems \cite{bubeck2012regret} or leverage domain knowledge to accelerate exploration-exploitation processes \cite{combes2018optimal, poiani2024best}. However, these solutions typically cater to specific scenarios or problem domains. This paper leverages LLMs to provide high-quality initializations for MAB algorithms, thereby improving both sample efficiency and generalizability.

The emergence of LLMs represents a paradigmatic shift in wireless network applications, catalyzing breakthrough innovations across diverse domains Ref. \cite{chen2023introspective, zhang2024solving, zhou2024large}. Ref. \cite{chen2023introspective} proposed an innovative prompting paradigm that induces self-optimization concepts to enhance LLM performance in few-shot and zero-shot learning scenarios without requiring fine-tuning. Ref. \cite{zhang2024solving} introduced the OptLLM framework, which synergistically integrates LLMs with external solvers to refine optimization problem modeling and solution processes through natural language queries. In addition, \cite{zhou2024large} explored LLMs' potential in wireless network optimization through ICL, demonstrating competitive performance with traditional deep reinforcement learning approaches in base station power control tasks. This paper explores the capabilities of LLMs in solving MAB problems with large arm spaces, highlighting their effectiveness in facilitating efficient knowledge transfer between small- and large-scale scenarios by bridging optimization-based and data-driven methodologies.

\section{System Model}\label{SecSM}
We consider an uplink WiFi 7 network, as illustrated in Fig.~\ref{scenario}, consisting of a central coordinator and $M$ BSSs.  
Each BSS contains one AP-MLD and several STA-MLDs\footnote{In the following, the terms AP-MLD and AP, as well as STA-MLD and STA, are used interchangeably.}.  
Let $\mathcal{M} = \{1, \ldots, M\}$ and $\mathcal{N} = \{1, \ldots, N\}$ denote the sets of APs and STAs in this network, respectively.  
The set of STAs $\mathcal{N}_m$ is associated with the $m$-th AP (or BSS), where $|\mathcal{N}_m| = N_m$ and $N = \sum_{m=1}^M N_m$.  
In MLO, each STA can support up to $l$ simultaneous transmission links across the $\{2.4, 5, 6\}$ GHz frequency bands~\cite{deng2020ieee}.  
Let $L(n, m)$ denote the set of links where the $n$-th STA transmits to the $m$-th AP, and $|L(n, m)| \leq l$.  
It is worth noting that each frequency band can support multiple channels.

We consider a saturated scenario where each STA always has packets ready for transmission.  
The CSMA protocol is employed to coordinate transmissions among STAs operating on the same channel via carrier-sensing and back-off mechanisms.  
In dense WiFi 7 networks, as shown in Fig.~\ref{scenario}, a central coordinator is deployed to facilitate information exchange among multiple APs.  
The objective of the system is to maximize network throughput by deploying an intelligent channel allocation algorithm at the central coordinator.

\subsection{Multi-Link Operation}
MLO allows devices to establish multiple simultaneous connections across different bands and channels, featuring two operation modes in WiFi 7: Simultaneous Transmit and Receive (STR) and Non-STR (NSTR). In NSTR, cross-link interference prevents a single MLD from performing simultaneous transmission and reception. The MLD will assess whether to activate multiple links. If it does, the MLD evaluates the backoff states at each link to ensure temporal alignment from start to finish, synchronizing transmissions across the multiple links; Otherwise, the MLD transmits data at the first competing link, and during this time, it cannot sense the states of channels of other links. In contrast, STR enables independent link operations with asynchronous data transmission and reception, maintaining distinct channel access parameters for each link. Therefore, the STR mode is more flexible and efficient than the NSTR.

Fig.~\ref{MLO} illustrates the transmission modes of SLO, NSTR, and STR. 
In SLO, the transmitter can use only one link to transmit data at any given time.
In NSTR, multiple links connected to an MLD can either synchronize and transmit simultaneously or operate with only a single active link. For instance, the MLD defers Link 2's back-off counter to allow the simultaneous transmission of Packets 1 and 2 on both links. However, when transmitting Packet 3, the exclusive channel occupancy of Link 1 inhibits the carrier-sensing capability of Link 2, thereby suspending the back-off process on Link 2.
In STR,  packets can be transmitted independently on different links. As a result, Packets 1–4 can be transmitted asynchronously and concurrently on Links 1 and 2, significantly reducing the overall transmission time.
In this paper, we investigate the dynamic channel allocation problem under the STR mode.

\begin{figure}[!t]
\centerline{\includegraphics[scale=0.3]{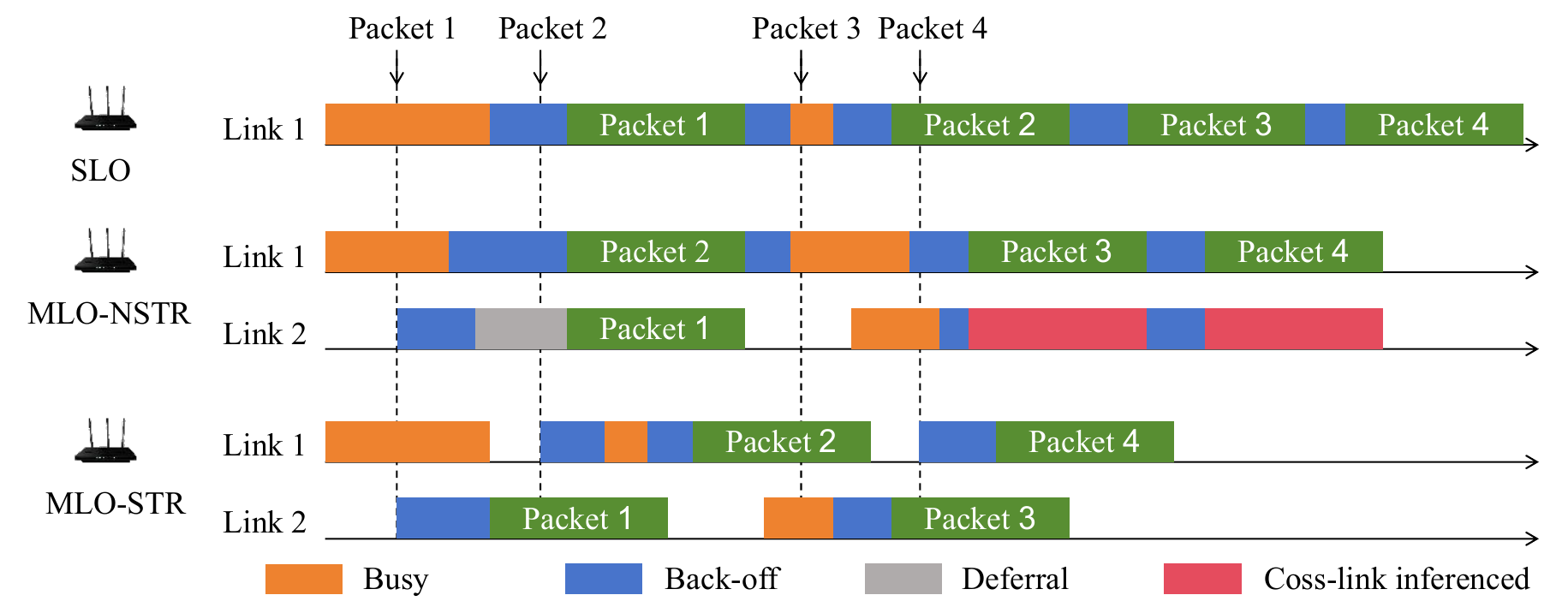}}
\caption{An illustration of the SLO, NSTR, and STR transmission modes in the IEEE 802.11be standard.}
\label{MLO}
\end{figure}

\subsection{Signal Model}\label{Sec_SigMod}
For uplink transmission, the signal received at link $k$ of the $m$-th AP from the $n$-th STA with transmit power $P^k_{n,m}$ is expressed as 
\begin{equation}\label{SigMod}
y_{n,m}^{k}= \sqrt{P^k_{n,m}}\tilde{h}^k_{n,m} \hat{h}^k_{n,m} x^k_{n,m} + z_{n,m}^{k} + w_{m},
\end{equation}
where $x_{n,m}^{k}$ is the normalized signal and $w_{m}$ is the background noise with the normalized power $\sigma_{m}^{2}$. Terms $\tilde{h}^k_{n,m}$ and $\hat{h}^k_{n,m}$ are the large- and small-scale channel fading between the $n$-th STA and the $m$-th AP at link $k$.
In addition, $z_{n,m}^{k}$ is the interference from other links. Notice that Eq.~\eqref{SigMod} represents the components of the received signals, which can be used to calculate the Signal-to-Interference-plus-Noise Ratio (SINR). Let $I_{n,m}^{k}$ be the power of $z_{n,m}^{k}$, which can be calculated by
\begin{equation}\label{Inter}
I_{n,m}^{k}={\sum_{u \in \mathcal{M}} \sum_{v \neq n, v \in \mathcal{N}_u}\sum_{j\in\Lambda_{k}} |\tilde{h}^j_{v,m} \hat{h}^j_{v,m}|^2}P_{v,u}^{j},
\end{equation}
where $m, u\in \mathcal{M}$ are the index of APs and $n \in \mathcal{N}_m,v \in \mathcal{N}_u$ are the index of STAs. 
In addition, $\Lambda_{k}$ is the set of links activated simultaneously on the same channel of link $k$ and $P_{v,u}^{j}$ denotes the transmit power of the link $j$ from the $v$-th STA to the $u$-th AP. Terms $\tilde{h}^j_{v,m}$ and $\hat{h}^j_{v,m}$ are the large- and small-scale channel fading between the $v$-th STA and the $m$-th AP. 
Thus, the SINR at link $k$ of the $m$-th AP is given by
\begin{equation}\label{sinr}
\Gamma_{n,m}^{k}=\frac{|\tilde{h}^k_{n,m} \hat{h}^k_{n,m}|^2 P_{n,m}^{k}}{I_{n,m}^{k}+\sigma^2_{m}}.
\end{equation}

In practice, each transmission can only support a limited number of transmit rates, depending on the number of modulation and coding schemes (MCSs) \cite{tong2021throughput}. Typically, the transmit rate is determined by looking up the table based on the received instantaneous SINR. Mathematically, let $\mathfrak{C}_{n,m}^{k}$ be the set of transmit rates at link $k$ of the $n$-th STA and the $m$-th AP. The STAs can adjust the transmit rates by mapping the instantaneous SINR to a suitable transmit rate. This process can be expressed as $f\left(\Gamma_{n,m}^{k}\right) \rightarrow c^k_{n,m} \in  \mathfrak{C}_{n,m}^{k} $, where $f(\cdot)$ is a mapping function and $c_{n,m}^{k}$ is the obtained transmit rate. Note that a higher SINR corresponds to a higher transmit rate but with a lower success probability. 

\subsection{Carrier-Sensing Graph}\label{Sec_CSG}
Under the CSMA protocol, each transmitter performs the carrier-sensing mechanism before transmitting data on the allocated channel. Consider that link $j$ at the $v$-th STA and link $k$ at the $n$-th STA are assigned to the same channel. They cannot be activated simultaneously if one of their received power exceeds the carrier-sensing threshold. 
For example, the $v$-th STA cannot transmit on link $j$ if 
\begin{equation}\label{Sensing}
|\tilde{h}^j_{n,v} \hat{h}^j_{n,v}|^2 P_{n,m}^{k}\geq S_v,
\end{equation}
where $S_v$ is the carrier-sensing threshold at the $v$-th STA and $P_{n,m}^{k}$ is the transmit power of link $k$ from the $n$-th STA to the $m$-th AP. Terms $\tilde{h}^j_{n,v}$ and $\hat{h}^j_{n,v}$ are the large- and small-scale channel fading between the $n$-th STA and the $v$-th STA. To capture these relationships among multiple links, we construct an undirected carrier-sensing graph $\mathcal{G}(\mathcal{V},\mathcal{E})$ for the WiFi 7 network. Each node in $\mathcal{V}$ represents a link assigned on the same channel, and an edge in $\mathcal{E}$ exists between two nodes if they can't be active simultaneously.

Based on the carrier-sensing graph $\mathcal{G}(\mathcal{V},\mathcal{E})$, we can determine all feasible transmission states in the WiFi 7 network. Defining the set of all feasible states as $\mathcal{F}$, each term $F$ in $\mathcal{F}$ is a vector of length equal to the number of total nodes. When a node is active, the corresponding element in the vector $F$ is $1$, otherwise $0$. Note that only the nodes that do not share a common edge can be active simultaneously. For example, as shown in Fig.~\ref{fig1:a}, when link 3 is active,  the corresponding feasible states are $\mathcal{F} = \{(00100),(01100),(10100)\}$. In other words, link 3 can only be activated simultaneously with link 1 or link 2.

\subsection{Throughput Calculation}
We adopt the Ideal CSMA Network (ICN) model in \cite{jiang2009distributed} to calculate the network throughput using the CTRM model. Fig.~\ref{fig1:b} illustrates the state transition diagram constructed from the carrier-sensing graph of Fig.~\ref{fig1:a} when link 1 is active. As shown in Fig.~\ref{fig1:b}, the state transition of neighboring states satisfies a continuous-time Markov chain.
Let $\mathfrak{s}^k$ and $\mathfrak{b}^k$ be the transmission time and back-off time of the transmitter of link $k$, respectively, which are random variables that may follow an arbitrary distribution \cite{jiang2009distributed}. In addition, $\mathbb{E}[\mathfrak{b}^k]$ and $\mathbb{E}[\mathfrak{s}^k]$ are the means of variables $\mathfrak{s}^k$ and $\mathfrak{b}^k$, respectively.
In fact, Fig.~\ref{fig1:b} constitutes a Markov chain, where the left-hand state transfers to the right-hand state with a probability of ${1}/{\mathbb{E}[\mathfrak{b}^k]}$, and the right-hand state transfers to the left-hand state with a probability of ${1}/{\mathbb{E}[\mathfrak{s}^k]}$. We define $\rho_k$ as the access intensity at link $k$, which is the ratio of the mean transmission time and the mean back-off time at link $k$, i.e., $\rho_k={\mathbb{E}[\mathfrak{s}^k]}/{\mathbb{E}[\mathfrak{b}^k]}$.
\begin{figure}
\centering
\captionsetup[subfloat]{labelsep=space, font=scriptsize}  
\subfloat[Carrier-sensing graph]{\label{fig1:a}
\includegraphics[scale=0.36]{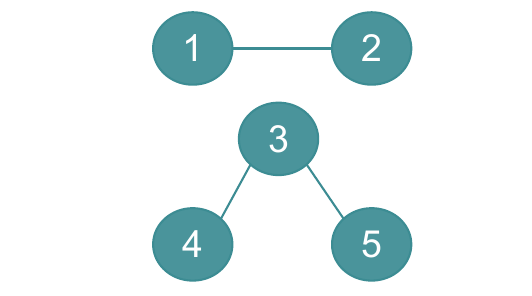}}   
\hfil
\subfloat[State transition diagram]{\label{fig1:b} 
\includegraphics[scale=0.38]{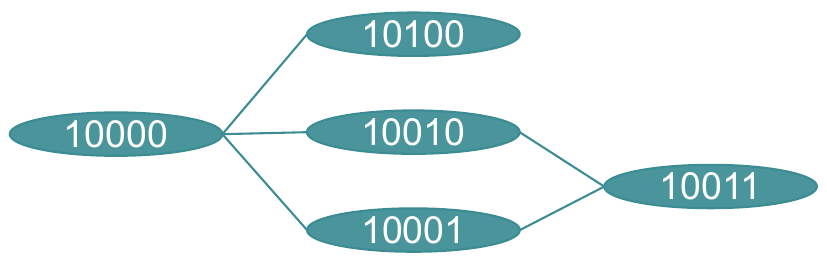}}   
    \caption{ (a) The carrier-sensing graph of five links on the same channel;
(b) The state transition diagram of five links when link 1 is active.}   
\label{fig1}   
\end{figure}

According to \cite{jiang2009distributed}, the stationary state probability of feasible state $F_i \in \mathcal{F}$ is calculated by
\begin{equation}\label{Thro}
P_{{F}_i}=\frac{\prod_{{F}_i(k)\in {F}_i} \rho_{k}^{{F}_i(k)}}{\sum_{{F}_i\in \mathcal{F}}\prod_{{F}_i(k)\in {F}_i} {\rho_{k}^{{F}_i(k)}}}, 
\end{equation}
where $F_i (k)$ is the $k$-th element of the $i$-th vector in the feasible states set $\mathcal{F}$, representing the active (`1') or passive (`0') state of the $k$-th link in this feasible state. Therefore, the throughput of the $n$-th STA and the $m$-th AP at link $k$ can be obtained by summing over the product of all stationary state probabilities $P_{{F}_i}$ and their corresponding transmission rates $c_{n,m}^{k}(F_i)$ of all feasible states, in which the $k$-th element is active. Then, the throughput $\mathcal{T}(k,n,m)$ is calculated by
\begin{equation}
\mathcal{T}(k,n,m)=\sum_{{F}_{i}:{F}_{i}(k)=1}c_{n,m}^{k}(F_i)P_{{F}_i}.
\end{equation}

 For the STR mode, the throughput of different links on an MLD can be considered independent as they transmit on different channels independently. 
 Hence, the total throughput of an MLD can be obtained by accumulating the throughput of each connected link, which can be expressed as
\begin{equation}
\mathcal{T}(n,m)=\sum_{k \in L{(n,m)}}\mathcal{T}(k,n,m),
\end{equation}
where $L{(n,m)}$ is the set of links assigned to the ${n}$-th STA  by the ${m}$-th AP under MLO.

\section{Problem Formulation}\label{SecPF}
This section studies the channel allocation problem in dense WiFi 7 networks with MLO under the STR mode. We first formulate this problem as a combinatorial optimization problem in Section~\ref{SecPF-A}. To overcome the lack of prior information on the network, we further model this problem as an MAB problem in Section~\ref{SecPF-B} to learn the best channel allocation strategy online. Finally, we provide some intuitions and rationalities behind the solutions of the MAB problem. 

\subsection{The Combinatorial Optimization Problem}\label{SecPF-A}
The system's goal is to maximize the overall throughput\footnote{Notice that the objective function can be generalized to any network utility function, such as proportional and max-min fairness.} of the WiFi 7 network by allocating the channels to different STA-MLDs. This problem can be formulated as a combinatorial optimization problem, i.e.,
\begin{equation}\label{total}
\begin{aligned}
& \underset{}{\max\limits_{L_t{(n,m)}}}
& &  \sum^T_{t = 1} \sum_{m \in \mathcal{M}} \sum_{n  \in \mathcal{N}_m} \sum_{k \in L_t{(n,m)}} \mathcal{T}_t(k, n, m) \\
& \mathrm{s.t.}
& & |L_t{(n,m)}| \leq l, \ \forall n, m, t,\\
\end{aligned}
\end{equation}
where $t=1,\ldots, T$ is the time slot required to update the channel allocation strategy. Notice that the time slot duration is longer than the CSMA back-off interval to ensure the Markov chain stationarity for valid throughput calculation. In addition, the constraint $|L_t{(n,m)}| \leq l$ means that each STA-MLD can support up to $l$ simultaneous transmission links. Typically, $l$ is set to 3 in the STR mode, i.e., STA-MLDs can access up to one link within each of the $\{2.4, 5, 6\}$ GHz frequency bands. 

In fact, all MLO configurations can be viewed as discrete feasible solutions. If the network state (or Graph $\mathcal{G}(\mathcal{V},\mathcal{E})$) is known a priori, problem \eqref{total} can be solved using optimization methods. However, directly solving this problem in WiFi 7 networks is challenging. First, constructing all carrier-sensing graphs in a network is computationally intensive, which is an NP-hard problem in dense networks \cite{jiang2009distributed}. In addition, the number of feasible solutions is extremely large due to the numerous combinations of frequency bands and channels in WiFi 7. To overcome these challenges, we resort to online learning theory to obtain the best channel allocation strategy by formulating this problem as an MAB problem.

\subsection{The Multi-Armed Bandit Problem}\label{SecPF-B}
MAB is an efficient framework for handling sequential decision-making problems by balancing the exploitation and exploration dilemma in the learning process \cite{bubeck2012regret}. Thus, we formulate the dynamic channel allocation problem as a stochastic MAB problem where the central coordinator is viewed as the agent. The reward is the overall network throughput, which is a random variable due to the dynamic environment. The arm is the combination of bands and channels (i.e., MLO configurations) of all STAs. The agent aims to maximize the network throughput by selecting the best arm quickly.

We define some symbols and notations related to the MAB problem. Let $\mathcal{J}(m)$ denote the set of feasible configurations for the $m$-th AP, defined as the Cartesian product $\prod_{n \in \mathcal{N}_m} \mathcal{L}(n,m)$, where each $\mathcal{L}(n,m)$ represents the set of all feasible solutions of $L(n,m)$. Thus, the arms set $\mathcal{A}$ is defined as $\mathcal{A}=\prod_{m \in \mathcal{M}} \mathcal{J}(m)=\{a^1, a^2, \ldots, a^{|\mathcal{A}|}\}$. In addition, the reward of arm $a^i$ is denoted by the normalized network throughput, i.e., $r^i\doteq {\mathcal{T}(a^i)}/{\max_{j}\mathcal{T}(a^j)}$.
Define the best arm as $a^{*}$, which has the largest reward $r^{*}$. Then, the set of $\epsilon$-optimal arms can be defined as $\mathcal{A}_{\epsilon} \doteq \{ a^{i} | r^{*} - r^{i} \leq \epsilon \}$.

There are several typical algorithms for solving the above stochastic MAB problem, such as the $\epsilon$-greedy algorithm \cite{auer2002finite}, the Upper Confidence Bound (UCB) algorithm \cite{auer2002finite}, and the Thompson sampling algorithm \cite{tong2023model}. However, they all suffer from the complexity issue, a critical issue in wireless networks, especially when the arm space is huge. There are two promising ways to reduce the sample complexity: 1) enhancing the capability to balance the exploitation and exploration dilemma; 2) reducing the number of arms required to explore.
For the former, we introduce the BAI-MCTS algorithm by integrating the BAI and MCTS algorithms to efficiently balance the exploration and exploitation dilemma, as well as guarantee the quality of allocations. For the latter, we present the LLM-BAI-MCTS algorithm that leverages LLMs to provide an effective initial solution for BAI-MCTS. We will show that both algorithms can find a good channel allocation for the WiFi links and efficiently reduce sample complexity. Moreover, the LLM-BAI-MCTS algorithm can enhance the generalizability of the BAI-MCTS algorithm across diverse network scenarios.

\begin{figure}[!t]
\centerline{\includegraphics[scale=0.51]{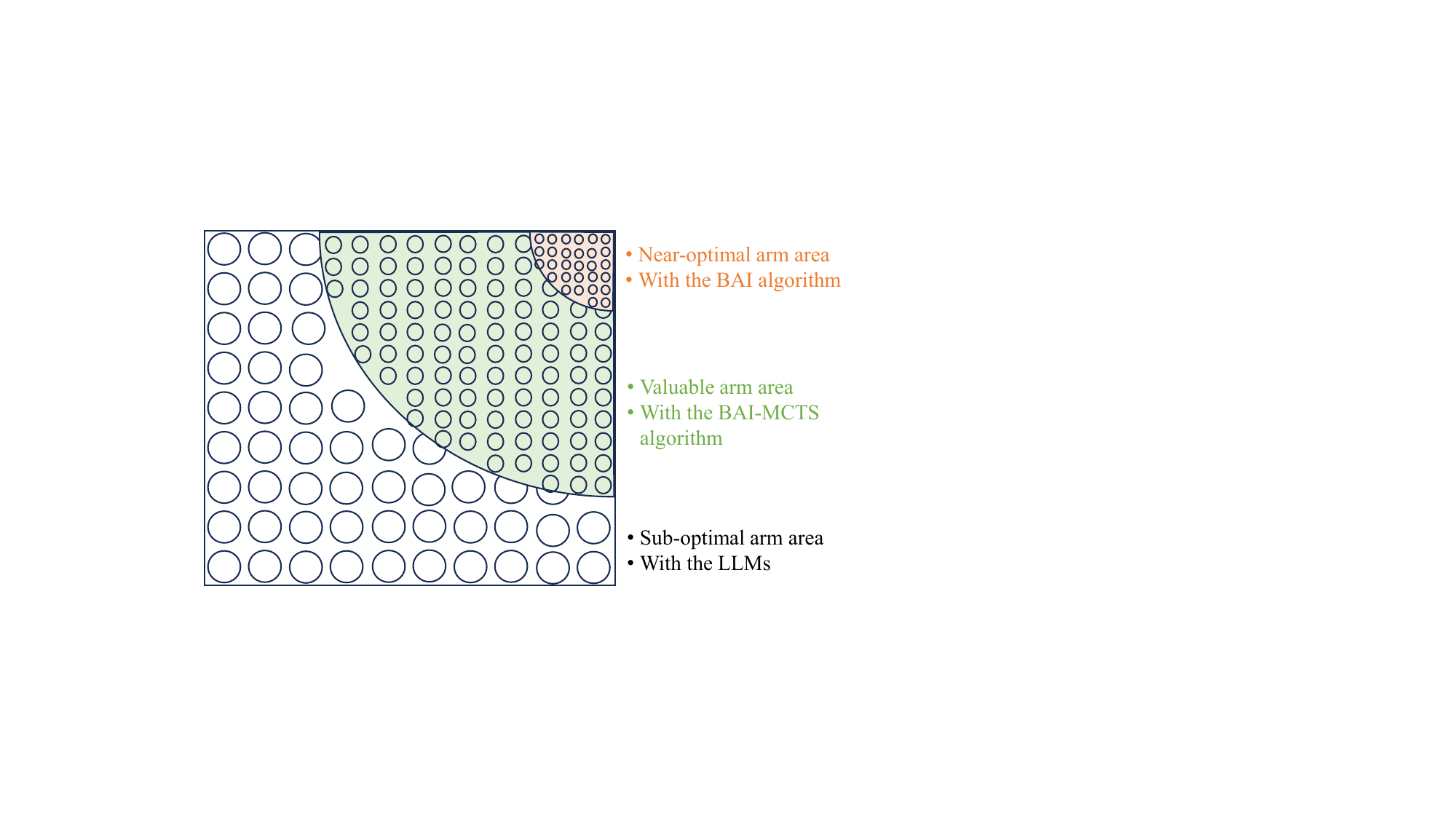}}
\caption{An illustration of the intuitions behind the proposed algorithms.}
\label{Insights}
\end{figure}
The intuitions behind the proposed algorithms are illustrated in Fig.~\ref{Insights}. Each circle represents an arm, with the larger circles being farther from the optimal solution. The optimal and near-optimal arms\footnote{Note that the near-optimal arm is the arm with a value very close to that of the optimal arm.} are located in the top right, while suboptimal arms are scattered in the lower left. As pointed out in \cite{audibert2010best}, in many applications, most of the arms are located in the suboptimal arm area. In contrast, the near-optimal arms are well-centralized around the optimal arm. This indicates that only the arms located in the middle area require more exploration. Therefore, our intuition is to run the algorithm in this valuable arm area to find a sufficiently good solution, ignoring the suboptimal and near-optimal arm areas. To achieve this, we first propose the BAI-MCTS algorithm to find a sufficiently good arm by not obsessively targeting the near-optimal arm area. Thereafter, we utilize LLMs to provide an effective initial solution for the BAI-MCTS algorithm, preventing unnecessary exploration in the sub-optimal arm area. In the following, we introduce the BAI-MCTS and LLM-BAI-MCTS algorithms in Sections~\ref{SecBA} and~\ref{SecLLM}, respectively.

\section{The BAI-MCTS Algorithm}\label{SecBA}
This section presents the BAI-MCTS algorithm to solve the MAB problem. We first introduce the MCTS algorithm to accelerate the arm exploring process in Section~\ref{SecBA-A}, and then present the BAI-MCTS algorithm in Section~\ref{Bai-A} for finding an $\epsilon$-optimal solution for the channel allocation problem. Finally, we derive two upper bounds for the sample complexity and error probability of the BAI-MCTS algorithm in Section~\ref{Theory}.

\subsection{The MCTS Algorithm}\label{SecBA-A}
The MCTS algorithm aims to efficiently explore the decision space and find an optimal strategy with limited resources by balancing the exploration and exploitation dilemma. It first establishes an initial network state at the root node to construct an MCT, which consists of the number of STAs and feasible MLO configurations. The height of an MCT corresponds to the number of STAs. The depth of the root node $d_0$ is set to 0. Specifically, the first-layer nodes represent configurations of set $\mathcal{L}(1,1)$. Then, the second-layer nodes $\mathcal{L}(2,1)$ are expanded from the first-layer nodes, and so on to the final layer. The channel allocations of all STAs are performed by tracing back from the terminal node to the root node. In the MAB problem, we can regard each terminal node as an arm. As shown in Fig.~\ref{MCTS}, the MCTS algorithm contains four steps:
\begin{enumerate}
\item \textbf{Selection strategy} is a key step in MCTS for arm selection. We extend the Empirical-Best and Transportation-Cost (EB-TC$_\epsilon$) algorithm \cite{jourdan2023varepsilon} to select the node of the next layer among the child nodes of the current node. The EB-TC$_\epsilon$ algorithm is explained in detail in Section~\ref{Bai-A}.
\item \textbf{Expansion strategy} is to add a new node to expand the search tree. We perform it by randomly choosing one node from the unselected child nodes of the current node.
\item \textbf{Simulation strategy} is employed to evaluate the selection trajectory to the current node, where an MC simulation is performed to traverse from the expanded node to a terminal node and obtain the corresponding reward.
\item \textbf{Back-propagation strategy} updates the simulation parameters for the selected expansion node and all its ancestor nodes. 
\end{enumerate}
\begin{figure}[!t]
\centerline{\includegraphics[scale=0.28]{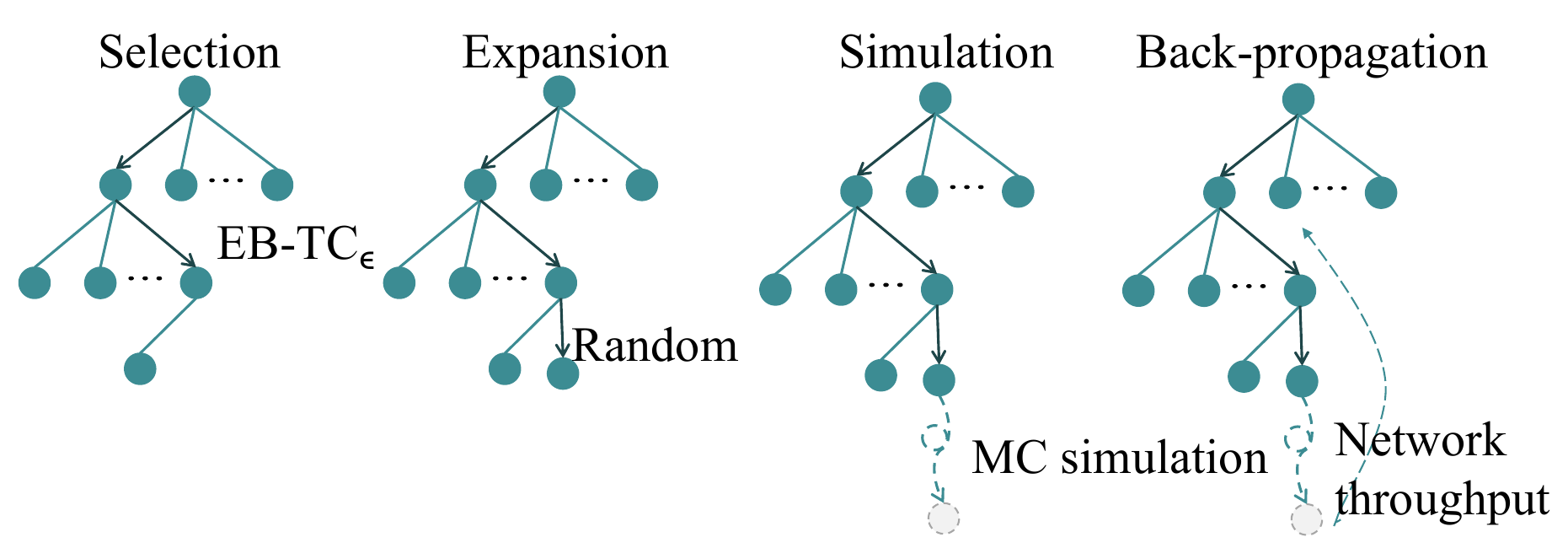}}
\caption{An illustration of four steps of MCTS, containing selection, expansion, simulation, and back-propagation strategies.}
\label{MCTS}
\end{figure}

While the MCTS algorithm provides an efficient way to explore the decision space, the heuristic nature of it cannot guarantee the quality of the solution. To address this issue, we integrate the BAI algorithm into the MCTS algorithm to obtain an $\epsilon$-optimal solution for the MAB problem. 

\subsection{The BAI-MCTS Algorithm}\label{Bai-A}
The pseudocode of the BAI-MCTS algorithm is presented in Algorithm \ref{BAI-MCTS}. Lines 6–15, 18, 19, and 23–30 correspond to the selection, exploration, simulation, and back-propagation steps, respectively. Note that the EB-TC$_\epsilon$ algorithm is an ($\epsilon$, $\delta$)-Probably Approximately Correct (PAC) algorithm.
It satisfies $\mathbb{P}(\tau_{\epsilon,\delta}<+\infty,\hat{\mathcal{I}}_{\tau_{\epsilon,\delta}}\notin\mathcal{A}_{\epsilon})\leq\delta$, where $\tau_{\epsilon,\delta}$ is the convergence time regarding of ($\epsilon$, $\delta$)-PAC, and
$\hat{\mathcal{I}}_{\tau_{\epsilon,\delta}}$ is the arm with the best empirical reward at time slot $\tau_{\epsilon,\delta}$. The key idea of the EB-TC$_\epsilon$ algorithm is to identify the EB leader and the TC$_\epsilon$ challenger, which are the two most promising arms at the current time slot. Next, we show how the EB-TC$_\epsilon$ algorithm can be integrated into the MCTS algorithm. 

First, for a child node $d_c$ at layer $h$ of node $d$, its empirical reward  at time slot $t$ is defined as
\begin{equation}\label{mean}
{\hat{\mu}_{d,d_c} = \frac{1}{\mathbb{N}_{t,d_c}}\sum\limits_{\mathfrak{n} = 1,\ldots,t}r_{\mathfrak{n},d_c}\mathds{1}\left(d_{\mathfrak{n},h}=d_c\right)},
\end{equation}
where $\mathbb{N}_{t,d}$ is the number of times node $d$ is selected at time slot $t$, and $r_{\mathfrak{n},d_{c}}$ is the reward received by the child node $d_c$ at time slot $\mathfrak{n}$. In addition, $d_{\mathfrak{n},h}$ is the node selected at the $h$-th layer of MCT at time slot $\mathfrak{n}$, and $\mathds{1}(\cdot)$ is the indicator function.
If there is no ambiguity, the time slot subscript is omitted in the following.

\begin{algorithm}[!t]
\caption{BAI-MCTS Run by the Central Coordinator}
\label{BAI-MCTS}
\begin{algorithmic}[1]
\State \textbf{Initialize:}  \   $\eta = h = 0$, ${\mathcal{D}}_0 = d_0$, $\bar{\beta}_d = 1/2$, $T_d = \mathbb{N}_{d,d_{c_1}}^{d_{c_2}} = \mathbb{N}_d = 0$, for any node $d$ and $d_{c_1}, d_{c_2} \in \mathcal{C}(d)$
\State \textbf{Input:}  \ $d_0$, $\epsilon$, $\delta$, $N$
\State \textbf{Output:}  \ ${\mathcal{D}}_h$, for ${h = 1, \ldots, N}$
\While {$\eta \neq N$}
	\State $d = {\mathcal{D}}_\eta$
	\While { $\forall~ \mathbb{N}_{\mathcal{C}(d)}\neq 0$ and $h < N$}
		\State $h = \mathcal{H}(d)+1$
		\State choose $B_{d}$, $O_d$ according to (\ref{EB}), (\ref{TC})
		\State $T_{d}(B_d,O_d) = T_{d}(B_d,O_d)+1$
		\State update $\beta_{d}(B_{d},O_d)$, $\bar{\beta}_{d}(B_d,O_d)$ by (\ref{beta}), (\ref{betabar})
			\If { (\ref{choose who}) is satisfied }
				\State update $\mathbb{N}_{d,O_d}^{B_d}$ by (\ref{N}), $d\leftarrow O_d$
            \Else
				\State $d\leftarrow B_d$
\EndIf

\EndWhile		
		\If { $d$ is not the terminal node }
			\State $d\leftarrow$ select one node from $\{ d_c \in \mathcal{C}(d) \,|\, \mathbb{N}_{d_c} = 0 \}$
			\State execute the MC simulation to the terminal node $d_p$
		\Else
			\State $d_p\leftarrow d$
\EndIf  
			\State collect the reward of node $d_p$
			\While {  $\mathcal{H}(d) \neq 0$ }
				\State $\mathbb{N}_d = \mathbb{N}_d+1$, update $\hat{\mu}_{\mathcal{P}(d),d}$ by (\ref{mean})
				\If {{ $\mathcal{H}(d) = \eta+1$ } and { (\ref{stop}) are satisfied }}
						\State$\eta = \eta+1$, ${\mathcal{D}}_{\mathcal{H}(d)} \leftarrow \hat{\mathcal{D}}_{\mathcal{P}(d)}$
\EndIf 
				\State$d\leftarrow \mathcal{P}(d)$		
\EndWhile 
\EndWhile
\end{algorithmic}
\end{algorithm}
Second, the sampling strategy of the EB-TC$_\epsilon$ algorithm is layer-wise, as illustrated in lines 6-15 in Algorithm \ref{BAI-MCTS}. When the agent traverses to node $d$ and each of its child nodes has been visited at least once, the EB leader is obtained as the node with the largest empirical reward in the child nodes, i.e.,
\begin{equation}\label{EB}
B_{d} = \underset{d_c\in \mathcal{C}(d)}{\operatorname*{\arg\max}} \ \hat{\mu}_{d,d_c},
\end{equation}
where $\mathcal{C}(d)$ is the set of child nodes at node $d$. In addition, $\mathcal{P}(d)$ represents the parent node of node $d$. Once the EB leader is identified, the TC$_\epsilon$ challenger can be chosen by 
\begin{equation}\label{TC}
O_d = \underset{d_c\neq B_d, d_c\in \mathcal{C}(d)}{\operatorname*{\arg\min}}\frac{\hat{\mu}_{d, B_d}-\hat{\mu}_{d,d_c}+\epsilon/N}{\sqrt{1/\mathbb{N}_{B_d}+1/\mathbb{N}_{d_c}}}.
\end{equation}
The target average proportion for the leader is defined as
\begin{equation}\label{betabar}
\bar{\beta}_{d}(B_d,O_d) = \frac{(T_d(B_d,O_d)-1)\bar{\beta}_d(B_d,O_d)+\beta_d(B_d,O_d)}{T_{d}(B_d,O_d)},
\end{equation}
where $T_{d}(B_d,O_d)$ is the number of selected times of nodes $(B_d,O_d)$, and
\begin{equation}\label{beta}
\beta_{d}(B_{d},O_d) = \frac{\mathbb{N}_{O_d}}{\mathbb{N}_{B_{d}}+\mathbb{N}_{O_d}}
\end{equation}
is an adaptive proportion.
Let $\mathbb{N}_{d,O_d}^{B_d}$ be the number of times node $O_d$ is selected, where $B_d$ is the leader. If 
\begin{equation}\label{choose who}
\mathbb{N}_{d,O_d}^{B_d}\leq(1-\bar{\beta}_{d}(B_d,O_d))T_{d}(B_d,O_d),
\end{equation} 
the agent selects node $O_d$ and updates
\begin{equation}\label{N}
\mathbb{N}_{d,O_d}^{B_d} = \mathbb{N}_{d,O_d}^{B_d}+1;
\end{equation}
Otherwise, it chooses node $B_{d}$. Therefore, we can fairly explore the two most potential nodes.

In addition to the sampling strategy, a stopping rule is required for EB-TC$_\epsilon$ to function as a fixed-confidence BAI algorithm. The stopping rule adopted in EB-TC$_\epsilon$ is the Generalized Likelihood Ratio ($\text{GLR}_\epsilon$), which is shown in Algorithm~\ref{BAI-MCTS} (lines 26-28).
Let $\eta(\cdot)$ be the stopping indicator, where $\eta = h$ indicates the convergence has occurred up to layer $h$. 
The converged node at layer $h$ is recorded as $\mathcal{D}_h$. Taking the root node $d_0$ as an example, if it satisfies
 \begin{equation}\label{stop}
 \begin{aligned}
 \min_{d_c\neq{\hat{\mathcal{D}}}_{d_0}, d_c \in \mathcal{C}(d_0)}\frac{\hat{\mu}_{d_0,{\hat{\mathcal{D}}}_{d_0}}-\hat{\mu}_{d_0,d_c}+\epsilon/N}{\sqrt{1/\mathbb{N}_{{\hat{\mathcal{D}}}_{d_0}}+1/\mathbb{N}_{d_c}}}\geq\sqrt{2g({\mathbb{N}_{d_0}},\delta)},
 \end{aligned}
 \end{equation} 
the node at the first layer converges to ${\mathcal{D}}_1 = \hat{\mathcal{D}}_{d_0}$. Here, $\hat{\mathcal{D}}_{d_0}$ is the node with the largest current empirical reward among the child nodes of $d_0$. 
In addition, $g(\mathbb{N}_{d_0},\delta)=2\mathcal{Y}(\log((|\mathcal{C}(d_0)|-1)/\delta)/2)+4\log(4+\log(\mathbb{N}_{d_0}/2))$ with $\mathcal{Y}(x) = x+\log(x)$. At the same time, the agent updates the indicator $\eta = \mathcal{H}({\mathcal{D}}_1) = 1$, where $\mathcal{H}(\cdot)$ is the function that returns the depth of the node in the MCT. After the first layer converges, the subsequent sampling process jumps into node ${\mathcal{D}}_1$ (line 5). When $\eta = N$, all layers converge. The BAI-MCTS algorithm outputs the final MLO configuration for the current network.

Fig.~\ref{example} illustrates the MCT construction process of the BAI-MCTS algorithm under a simplified scenario, where there are two AP-STA pairs, each with only two channels. The process comprises five decision-making phases:
\begin{itemize}
\item Phase 1: The MCT starts with a single root node. The algorithm randomly expands a node from the root node in the first step. It then performs the MC simulation from this expanded node until reaching a terminal node (i.e., all AP-STA pairs have their channel configurations). Finally, the reward from this terminal node is back-propagated to update the algorithm's parameters.
\item Phase 2: The algorithm continues exploring the root's unexplored child nodes, then conducting the same simulation-and-backpropagation process as in Phase 1.
\item Phase 3: Once all child nodes of the root node have been explored, the EB-TC$_\epsilon$ algorithm determines which node at the first layer to choose. Partial exploration of the chosen node's child nodes triggers the random expansion and reaches a terminal node. The reward from the terminal node is subsequently back-propagated.
\item Phase 4: This phase is similar to Phase 3, where the EB-TC$_\epsilon$ algorithm guides the selection among all explored child nodes and then follows the expansion and back-propagation process.
\item Phase 5: The agent selects a node in the first layer using the EB-TC$_\epsilon$ algorithm. Since all child nodes of the selected node in the first layer have already been explored, the agent proceeds the same algorithm to select nodes in the next layer, where it encounters a terminal node. Then, back-propagation is performed. 
\end{itemize}

During these five phases, the algorithm generates a sequence of channel combinations, $\{1,2\}, \{2,1\}, \{1,1\}, \{1,2\}, \{1,2\}$. As demonstrated in this specific illustration, the BAI-MCTS algorithm utilizes a hierarchical modified EB-TC$_\epsilon$ algorithm for MCT node selection and the stop criterion determination. The integration of the BAI and MCTS algorithms enables a heuristic $\epsilon$-optimal channel allocation for the throughput maximization problem. This indicates that the BAI-MCTS algorithm can efficiently reduce the sample complexity while optimizing the network throughput.

\begin{figure}[!t]
\centerline{\includegraphics[scale=0.70]{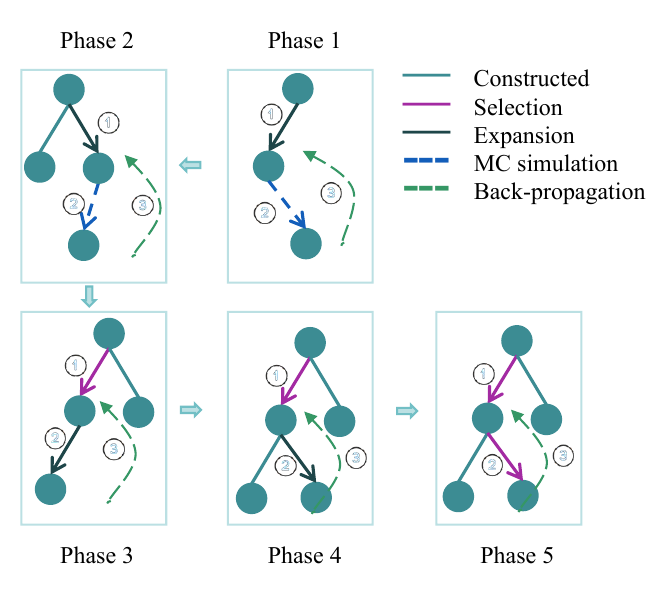}}
\caption{An illustration of the BAI-MCTS algorithm's channel allocation process for two AP-STA pairs, each limited to two MLO configurations.}
\label{example}
\end{figure}

\subsection{Theoretical Analysis}\label{Theory}
Next, we show how the BAI-MCTS algorithm can guarantee an $\epsilon$-optimal solution for the channel allocation problem by deriving two upper bounds for the sampling complexity and the error probability of the proposed algorithm, respectively. For convenience, we first give the notations related to the MCT. Assume that each AP-STA pair has $|{\mathcal{L}}|$ feasible configurations and the network throughput follows a Gaussian distribution $\mathfrak{N}$. In addition, the ${\boldsymbol{\mu}_{q^h}}$ and ${\boldsymbol{w}_{q^h}^*}$ denote the mean reward and the optimal allocation vector of the $q$-th node at layer $h$, respectively. The best node among $\mathcal{C}(q^h)$ is denoted by $d^*({\boldsymbol{\mu}_{q^h}})$, and the suboptimal gap is defined as  $\Delta_{d}:=\{\mu_{{d^*({\boldsymbol{\mu}_{q^h}})}}-\mu_{d}| d\in \mathcal{C}(q^h)\}$. According to \cite{jourdan2023varepsilon}, the asymptotic characteristic time is defined as the time required to verify an $\epsilon$-optimal arm among $\mathcal{C}(q^h)$ with the stopping rule $\text{GLR}_\epsilon$, i.e., 
\begin{equation}
    \begin{split}
        T_{\epsilon}^{h}({\boldsymbol{\mu}_{q^h}})=\operatorname*{min}_{d\in\mathcal{A}_{\epsilon}(\boldsymbol{\mu}_{q^h})}\operatorname*{min}_{\beta\in(0,1)}T_{\epsilon,\beta}^{h}({\boldsymbol{\mu}_{q^h}},d), 
    \end{split}
\end{equation}
where T$_{\epsilon,\beta}^{h}(\boldsymbol{\mu}_{q^h}, d)$ is the $\beta$-characteristic time for arm $d$, which is the time required for identifying whether the arm $d$ is $\epsilon$-optimal or not in terms of proportion $\beta$, i.e., 
\begin{equation}
    \begin{split}
        &{T_{\epsilon,\beta}^{h}({\boldsymbol{\mu}_{q^h}},d)}^{-1} =\\
        &\max _{\boldsymbol{w}_{q^h}\in\triangle_{|{\mathcal{L}}|},{\boldsymbol{w}}_{q^h}(d)=\beta}\min_{d'\neq d}\frac12\frac{({\mu}_{q^h,d}-\mu_{q^h,d'}+\epsilon)^2}{1/\beta+1/\boldsymbol{w}_{q^h}(d')}.
    \end{split}
\end{equation}
The $\triangle_{|{\mathcal{L}}|}$ is a $(|{\mathcal{L}}|-1)$-dimensional probability simplex, i.e., $\triangle_{|{\mathcal{L}}|}:=\left\{\boldsymbol{w}\in\mathbb{R}_{+}^{|{\mathcal{L}}|}\mid  \sum_{i=1}^{|{\mathcal{L}}|}\boldsymbol{w}(i)=1\right\}$. The ${\boldsymbol{w}}_{q^h}(d)$ is the allocation proportion for arm $d$ in terms of ${\boldsymbol{w}}_{q^h}$. We define the characteristic time $\hat T_{\epsilon}^{h}$ at layer $h$ as the maximum characteristic time among all nodes at layer $h$, i.e.,
\begin{equation}
    \begin{split}
        \hat T_{\epsilon}^{h}(\boldsymbol{\mu}^{h})& = \max_{q=1,\ldots,|{\mathcal{L}}|^{h}} T_{\epsilon}^{h}(\boldsymbol{\mu}_{q^h}),
    \end{split}
\end{equation}
where $\boldsymbol{\mu}^{h}$ denotes the mean reward vector for the sets of child nodes associated with all nodes at layer $h$. Now, we can give the upper bound on the expected sample complexity of the BAI-MCTS algorithm in the asymptotic regime.

\begin{theorem}\label{Theorem1}
\textit{For the BAI-MCTS algorithm with the reward bounded within the interval [0, 1] and its distributions vector ${\boldsymbol{\nu}}\sim\mathfrak{N}_{{|\mathcal{L}|}^{N}}$, its upper bound on the expected sample complexity $\mathbb{E}\left[\tau_{\epsilon,\delta}\right]$ in the asymptotic regime is given by
\begin{equation}
    \begin{split}
        \lim\sup_{\delta\to0}\frac{\mathbb{E}_{{\boldsymbol{\nu}}}[\tau_{\epsilon,\delta}]}{\log(1/\delta)}\leq \sum_{h=0}^{N-1}{\hat T_\frac{\epsilon}{N}^{h}}(\boldsymbol{\mu}^{h}).
    \end{split}
    \end{equation}}
\end{theorem}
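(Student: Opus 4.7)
The plan is to exploit the layer-wise, hierarchical structure of BAI-MCTS so that the overall convergence time decomposes into a sum of per-layer convergence times, each governed by a single-layer EB-TC$_{\epsilon/N}$ instance whose asymptotic sample complexity is already characterised in \cite{jourdan2023varepsilon}. Concretely, denote by $\tau^{h}_{\epsilon,\delta}$ the number of samples the algorithm spends before the stopping rule (\ref{stop}) fires at layer $h$; by construction of Algorithm~\ref{BAI-MCTS} (the outer loop advances $\eta$ only when a layer converges and then freezes that layer's decision), we have the pathwise identity $\tau_{\epsilon,\delta} = \sum_{h=0}^{N-1}\tau^{h}_{\epsilon,\delta}$, so that by linearity $\mathbb{E}_{\boldsymbol{\nu}}[\tau_{\epsilon,\delta}] = \sum_{h=0}^{N-1}\mathbb{E}_{\boldsymbol{\nu}}[\tau^{h}_{\epsilon,\delta}]$.

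The first step is to argue that at each layer $h$, conditioned on the converged ancestor $\mathcal{D}_h = q^h$, the algorithm's behaviour on $\mathcal{C}(q^h)$ is exactly that of a single-instance EB-TC$_{\epsilon/N}$ procedure with stopping rule $\mathrm{GLR}_{\epsilon/N}$ applied to a Gaussian bandit with mean vector $\boldsymbol{\mu}_{q^h}$: the empirical-best leader (\ref{EB}), TC challenger (\ref{TC}), proportion update (\ref{beta})--(\ref{betabar}), and the stopping condition (\ref{stop}) all specialise to the single-bandit formulas of \cite{jourdan2023varepsilon} with the gap parameter $\epsilon/N$. We then invoke the asymptotic sample-complexity bound for EB-TC$_\epsilon$ (Theorem on expected sample complexity in \cite{jourdan2023varepsilon}), which yields $\limsup_{\delta'\to 0}\mathbb{E}[\tau^{h}_{\epsilon/N,\delta'}\mid \mathcal{D}_h = q^h]/\log(1/\delta') \le T^{h}_{\epsilon/N}(\boldsymbol{\mu}_{q^h})$, and take the worst case over $q^h$ to replace $T^{h}_{\epsilon/N}(\boldsymbol{\mu}_{q^h})$ by $\hat{T}^{h}_{\epsilon/N}(\boldsymbol{\mu}^{h})$. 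Summing over $h=0,\ldots,N-1$, dividing by $\log(1/\delta)$, and taking $\limsup_{\delta\to 0}$ then gives the claimed bound, provided we allocate the global failure probability across layers (e.g.\ $\delta/N$ per layer by a union bound), since $\log(N/\delta)/\log(1/\delta)\to 1$.

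A second step is to justify that the $\epsilon/N$ tolerance per layer suffices to guarantee an end-to-end $\epsilon$-optimal terminal arm, which is what makes summing the per-layer characteristic times legitimate: because the throughput decomposes as a sum along the path from root to terminal node and each layer contributes an error of at most $\epsilon/N$ in the selected mean, the telescoping over the $N$ layers keeps the total gap bounded by $\epsilon$, matching the PAC guarantee assumed by the stopping rule $\mathrm{GLR}_{\epsilon/N}$ at every layer. This also underlies the scaling $\epsilon/N$ appearing inside $\hat T^{h}_{\epsilon/N}$ in the theorem's right-hand side.

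The main obstacle I anticipate is the conditioning argument in the second paragraph: the inner EB-TC$_{\epsilon/N}$ process at layer $h$ is only run after the random event $\{\eta = h\}$ has occurred, and the history up to that moment determines both $\mathcal{D}_h$ and the initial visit counts/empirical means at its children (which may be nonzero due to Monte-Carlo simulations performed while earlier layers were still converging). To handle this cleanly I would invoke the strong Markov property at the stopping time at which layer $h-1$ converges, observe that any simulation-phase pulls performed before that stopping time only provide additional (independent) samples of the children of $q^h$ and therefore can only reduce the subsequent $\tau^h$, and finally note that the asymptotic bound of \cite{jourdan2023varepsilon} is unchanged by a finite warm-start. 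This reduction justifies bounding $\mathbb{E}[\tau^h_{\epsilon,\delta}]$ by the single-layer EB-TC$_{\epsilon/N}$ bound and completes the argument.
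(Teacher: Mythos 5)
Your overall strategy matches the paper's: decompose the run into per-layer convergence times, treat each layer as an EB-TC instance with tolerance $\epsilon/N$ and a per-layer confidence level whose logarithm is asymptotically $\log(1/\delta)$ (you use $\delta/N$ by union bound, the paper uses $\delta'=1-\sqrt[N]{1-\delta}$ so that $(1-\delta')^N=1-\delta$; both give the same limit), invoke the asymptotic sample-complexity theorem of \cite{jourdan2023varepsilon} per layer, take the maximum over the nodes $q^h$ to get $\hat T^{h}_{\epsilon/N}(\boldsymbol{\mu}^h)$, and sum over $h$.

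There is, however, one genuine gap. You assert that, conditioned on the converged ancestor, the layer-$h$ selection problem \emph{is} a single Gaussian bandit with fixed mean vector $\boldsymbol{\mu}_{q^h}$, and the only difficulty you flag is the warm start (nonzero initial visit counts and the conditioning on $\{\eta=h\}$). But the harder issue is that a child node $d_c$ of $q^h$ is an \emph{internal} node of the tree: the reward observed when $d_c$ is traversed is the terminal reward of whatever path the still-unconverged lower layers happen to select, so the reward distribution of $d_c$ is time-varying and is not a priori the fixed Gaussian with mean $\mu_{q^h,d_c}$ that the EB-TC theorem requires. The paper closes this gap by invoking Lemma~23 of \cite{jourdan2023varepsilon}: for $t\geq T_1(\boldsymbol{\mu})$ the empirical pull proportions in the subtree converge to the optimal allocation $\boldsymbol{w}^*$, so the reward of an internal node is asymptotically the $\boldsymbol{w}^*$-weighted mixture of its descendants' Gaussian rewards, hence again (approximately) Gaussian and bounded in $[0,1]$, which is what licenses the recursive, layer-by-layer application of the single-instance bound. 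Your strong-Markov/warm-start argument does not substitute for this step; without some statement that the effective reward distribution at each internal layer stabilizes to a well-defined Gaussian bandit, the reduction to EB-TC$_{\epsilon/N}$ at layers $h<N-1$ is unjustified. Adding that stabilization argument (or the corresponding lemma) would complete your proof along essentially the paper's lines.
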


\begin{proof}
	See Appendix~\ref{convergence}.
	$\hfill\blacksquare$
\end{proof}

\textbf{Remark:} Theorem \ref{Theorem1} establishes that, in the fixed-confidence setting, the proposed algorithm will converge within a finite time slot $\tau_{\epsilon,\delta}$. Moreover, this algorithm can also be used in the anytime setting. The following theorem provides an upper bound on the probability of recommending a non-$\epsilon$-optimal arm when the algorithm has not stopped yet, thereby complementing Theorem \ref{Theorem1}. In practice, due to the relationship $T_{\epsilon,1/2}(\boldsymbol{\mu}) \leq 2T_\epsilon(\boldsymbol{\mu})$ \cite{jourdan2023varepsilon}, we focus on the case where $\beta = 1/2$.

\begin{theorem}  \label{Theorem2}
\textit{At time slot $t$, for ${\boldsymbol{\nu}} \sim \mathfrak{N}_{{|\mathcal{L}|}^{N}}$, $\mathbb{N}_{t,d_{t,\eta}} \geq D_{{\boldsymbol{\mu}}_{d_{t,\eta}}}$, and $\mathbb{N}_{t,d_{t,h}^*} \geq \max\{D_{{\boldsymbol{\mu}}_{d_{t,h}^*}}, T_1({{\boldsymbol{\mu}}_{d_{t,h}^*}})\},\forall h=\eta+1,\ldots, N-1$, if the BAI-MCTS algorithm with a fixed proportion $\beta = 1/2$ has not stopped yet, then $\forall\epsilon\geq0$,}
\begin{equation}
\begin{aligned}
&\mathbb{P}_{\boldsymbol{\nu}}\left(\hat{\mathcal{I}}_t \notin \mathcal{A}_{\epsilon}(\boldsymbol{\mu})\right) \leq \mathds{1}\left(\epsilon < \Delta_{\mathrm{max}}\right)\left(1 - (1-\delta)^{\frac{\eta}{N}} \right) \\
&\times \prod_{h=\eta+1}^{N-1} \frac{1}{4\sqrt{2(|\mathcal{L}| - 1)}} \left(1 - Q\left(|\mathcal{L}|, \mathbb{N}_{t, d_{t, h}^*}, \boldsymbol{\mu}_{d_{t, h}^*}, \frac{\epsilon}{N}\right)\right) \\
&\times \left(1 - Q\left(|\mathcal{L}|, \mathbb{N}_{t, d_{t, \eta}}, \boldsymbol{\mu}_{d_{t, \eta}}, \frac{\epsilon}{N}\right)\right),
\end{aligned}
\end{equation}
\textit{where $T_1(\boldsymbol{\mu})$ is defined in Appendix~\ref{convergence}. Terms $Q(K,t,\boldsymbol{\mu},\epsilon)$ and $D_{\boldsymbol{\mu}}$ are defined in Appendix~\ref{error}.}
\end{theorem}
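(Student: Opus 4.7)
\textbf{Proof Proposal for Theorem \ref{Theorem2}.}

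The plan is to decompose the event $\{\hat{\mathcal{I}}_t \notin \mathcal{A}_{\epsilon}(\boldsymbol{\mu})\}$ along the hierarchy of the MCT, separating the layers $0,\ldots,\eta$ that have already triggered the $\text{GLR}_\epsilon$ stopping rule from the layers $\eta+1,\ldots,N-1$ that are still being explored. A trivial first observation is that whenever $\epsilon\geq\Delta_{\max}$, every leaf is $\epsilon$-optimal and the probability of error is zero; this is exactly what the prefactor $\mathds{1}(\epsilon<\Delta_{\max})$ records, so the remainder of the argument can be carried out under $\epsilon<\Delta_{\max}$.

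For the converged prefix, I would invoke the $(\epsilon/N,\delta/N)$-PAC property that the EB-TC$_\epsilon$ sub-routine inherits when its tolerance and confidence are rescaled by $N$ (so that a union bound over the $N$ layers re-assembles the global $(\epsilon,\delta)$-PAC guarantee promised by the analysis of the stopping rule in Section~\ref{Bai-A}). For each layer $h\leq\eta$, the node $\mathcal{D}_h$ recorded at convergence is therefore an $(\epsilon/N)$-optimal child of $\mathcal{D}_{h-1}$ with probability at least $(1-\delta)^{1/N}$. Conditioning along the selected path from the root to $\mathcal{D}_\eta$ and using the Markov property of the sampling process across layers, the probability that the prefix of length $\eta$ is jointly $(\epsilon/N)$-optimal is at least $(1-\delta)^{\eta/N}$, yielding the factor $(1-(1-\delta)^{\eta/N})$ as an upper bound on the error contribution from this portion.

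For the exploring suffix, I would invoke the anytime bound of \cite{jourdan2023varepsilon} for EB-TC$_{1/2}$ applied locally at each node, which is legitimate precisely because the hypotheses $\mathbb{N}_{t,d_{t,\eta}}\geq D_{\boldsymbol{\mu}_{d_{t,\eta}}}$ and $\mathbb{N}_{t,d_{t,h}^*}\geq\max\{D_{\boldsymbol{\mu}_{d_{t,h}^*}},T_1(\boldsymbol{\mu}_{d_{t,h}^*})\}$ certify the sufficient-exploration regime in which their concentration inequalities hold. In that regime, a Gaussian tail bound on the transportation-cost statistic (\ref{TC}), combined with the $\beta=1/2$ proportion constraint (\ref{choose who}), produces a per-layer bound of the form $\frac{1}{4\sqrt{2(|\mathcal{L}|-1)}}(1-Q(|\mathcal{L}|,\mathbb{N}_{t,d_{t,h}^*},\boldsymbol{\mu}_{d_{t,h}^*},\epsilon/N))$ on the probability that the empirically best child at layer $h$ fails to be $(\epsilon/N)$-optimal; the constant $\frac{1}{4\sqrt{2(|\mathcal{L}|-1)}}$ arises from the $|\mathcal{L}|-1$ competing challengers and the sub-Gaussian proxy of the reward. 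An analogous bound with index $\eta$ handles the currently active node $d_{t,\eta}$.

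Finally, the selections performed in distinct layers are governed by disjoint empirical averages conditioned on the path, so a layer-wise conditioning argument lets me multiply the per-layer bounds to obtain the product over $h=\eta+1,\ldots,N-1$ and the extra factor indexed by $\eta$, which recovers the statement of the theorem. The hard part will not be the concentration estimates themselves, which can be imported from \cite{jourdan2023varepsilon}, but rather the bookkeeping: showing that the hierarchical sampling of BAI-MCTS really does decouple into independent EB-TC$_{1/2}$ sub-problems once conditioned on the MCT path, and verifying that the rescaling of $(\epsilon,\delta)$ to $(\epsilon/N,\delta/N)$ at every layer composes correctly through the product, without introducing hidden union-bound losses beyond those already displayed in the bound.
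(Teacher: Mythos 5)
Your decomposition is the same as the paper's: split the tree at the convergence indicator $\eta$, bound the converged prefix by the $(\epsilon/N,\,1-\sqrt[N]{1-\delta})$-PAC guarantee inherited from Theorem~\ref{Theorem1} to obtain the factor $1-(1-\delta)^{\eta/N}$, and bound each non-converged layer by the anytime error bound for EB-TC with $\beta=1/2$ from \cite{jourdan2023varepsilon}, invoked under the sufficient-sampling hypotheses $\mathbb{N}_{t,d_{t,\eta}}\geq D_{\boldsymbol{\mu}_{d_{t,\eta}}}$ and $\mathbb{N}_{t,d_{t,h}^*}\geq\max\{D_{\boldsymbol{\mu}_{d_{t,h}^*}},T_1(\boldsymbol{\mu}_{d_{t,h}^*})\}$, then combine multiplicatively. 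The $\mathds{1}(\epsilon<\Delta_{\max})$ prefactor, the rescaling $\epsilon\mapsto\epsilon/N$, and the layer-wise conditioning all match.

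The one step that would fail as you describe it is the derivation of the prefactor $\frac{1}{4\sqrt{2(|\mathcal{L}|-1)}}$: it is not produced by a Gaussian tail bound on the transportation-cost statistic combined with a union bound over the $|\mathcal{L}|-1$ challengers. In the paper's argument the per-layer contribution is $\boldsymbol{w}^*_{d_{t,h-1}^*}(d_{t,h}^*)\bigl(1-Q(\cdot)\bigr)$, where $\boldsymbol{w}^*_{d_{t,h-1}^*}(d_{t,h}^*)$ is the optimal allocation proportion assigned to the best child of the node at layer $h-1$ --- i.e., the fraction of visits that actually descend into the correct subtree so that the layer-$h$ sub-bandit gets explored at all. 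Lemma~23 of \cite{jourdan2023varepsilon} (the source of your $T_1$ hypothesis) guarantees that the empirical visit ratios track $\boldsymbol{w}^*$, and Lemma~20 of that reference then lower-bounds the weight of the best arm by $\frac{1}{4\sqrt{2(|\mathcal{L}|-1)}}$. Without this routing argument the product over layers has no reason to carry that constant, so you should replace your tail-bound justification with the allocation-weight lower bound; the rest of your sketch goes through as in the paper.
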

\begin{proof}
	See Appendix~\ref{error}.
	$\hfill\blacksquare$
\end{proof}

\textbf{Remark:} Theorem~\ref{Theorem1} reveals that the algorithm's sampling complexity has an upper bound closely related to $\epsilon$. By utilizing the BAI technique, the BAI-MCTS algorithm can reduce its reliance on the suboptimal gap, which frequently acts as a bottleneck in existing MAB algorithms. Theorem~\ref{Theorem2} further asserts that, within the non-converged layers, the error probability of BAI-MCTS decreases exponentially as the number of parent-node selections (i.e., the horizon of child nodes) increases.

\section{The LLM-BAI-MCTS Algorithm}\label{SecLLM}
While the BAI-MCTS algorithm provides an efficient search strategy to solve the channel allocation problem, the theoretical analysis reveals that its sampling complexity and error probability depend on the number of layers and nodes. To further reduce the sampling complexity and bolster the generalizability of the proposed algorithm, we propose the LLM-BAI-MCTS algorithm by integrating LLMs into the BAI-MCTS framework. 
The key idea is to provide a high-quality initialization for the BASI-MCTS algorithm by leveraging the LLMs' ICL and CoT capabilities.
This operation enables efficient exploration of the valuable arm area in Fig.~\ref{Insights}, rather than searching the entire space.

\begin{figure}[!t]
\centerline{\includegraphics[scale=0.70]{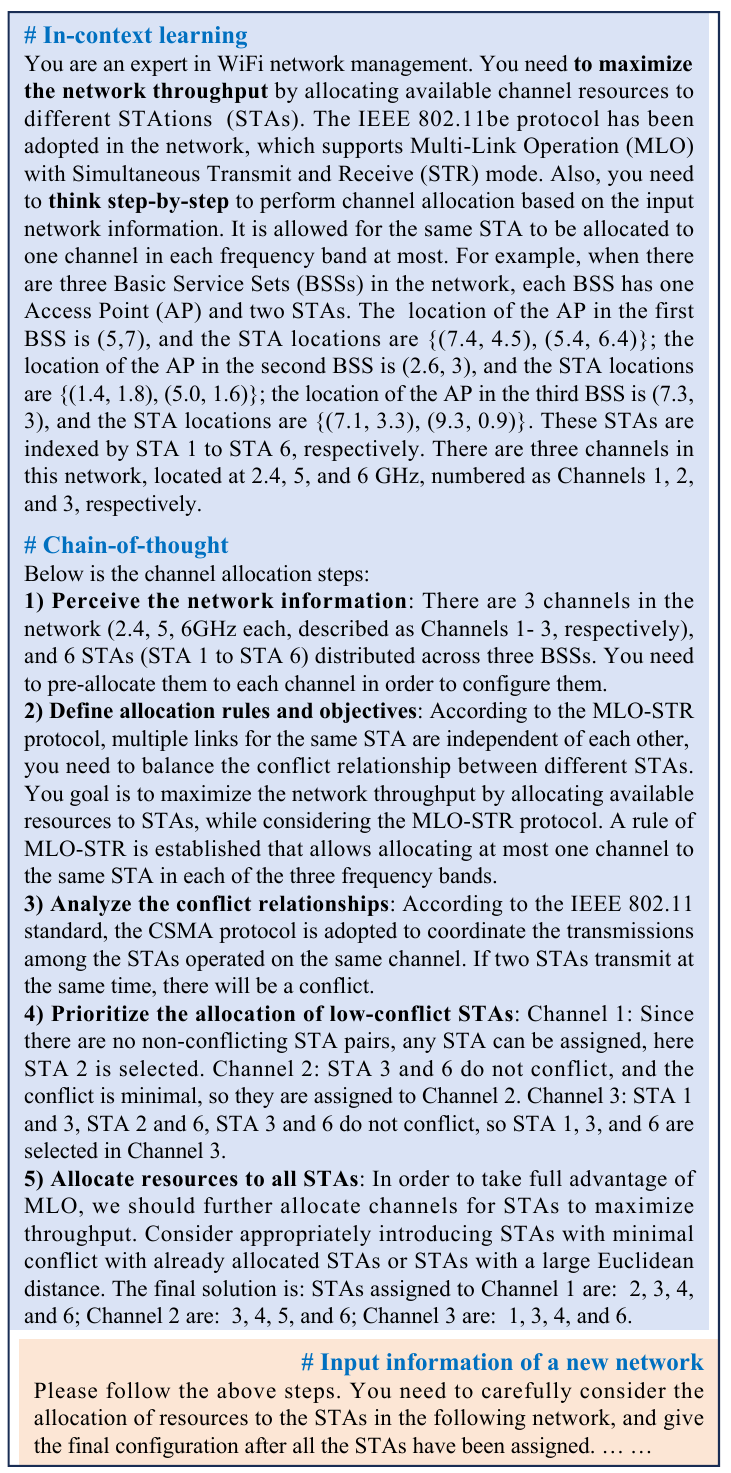}}
\caption{An example of the LLM-assisted channel allocation.}
\label{LLM}
\end{figure}
\subsection{In-Context Learning}
ICL refers to the ability of LLMs to perform specific tasks by leveraging information provided within the input context without turning the model parameters. The model ``learns'' how to respond or perform a task based on the examples and instructions given in the prompt itself. Instead of training the model separately for each task, ICL allows it to adjust its responses based on the context provided at inference time. The model uses these cues to generate relevant responses tailored to the task, which can be expressed as
\begin{equation}
LLM(D_{task}, \mathfrak{E}) \Rightarrow \text{output},
\end{equation}
where $D_{task}$ is the task description, and $\mathfrak{E}$ is the example set.  

In the dynamic channel allocation problem, the task description includes the problem's background, objectives, and network information, as illustrated in Fig.~\ref{LLM}. Specifically, we regard the LLM as a wireless network expert responsible for configuring channels for STAs to maximize overall network throughput. The background knowledge provides the core concepts of WiFi 7 networks, such as CSMA and MLO. The key network characteristics include the number of STAs, STAs' locations, and their conflicting relationships obtained from the carrier-sensing graph. In simulations, we adopt few-shot learning to enable the LLM to understand the network information and extract the structure information of the channel allocation problem. As shown in Fig.~\ref{LLM}, ICL utilizes examples consisting of three BSSs (i.e., 3 APs and 6 STAs) to learn how to perform channel allocation. Notice that the allocation results of these examples are derived by solving problem \eqref{total} with Algorithm~\ref{BAI-MCTS}. The motivation behind this operation is that solutions to combinatorial optimization problems in small-scale scenarios can serve as examples for ICL. This facilitates efficient knowledge transfer between small- and large-scale scenarios by bridging optimization-based and data-driven methods.

\subsection{Chain of Thought}\label{CoT}
CoT is a powerful prompting technique that encourages LLMs to break down their reasoning process into sequential steps, rather than jumping directly to a conclusion.
This approach enhances the model's ability to handle complex tasks by breaking them down into several manageable parts, improving the accuracy and coherence of the responses. 
In the channel allocation problem, we integrate CoT with ICL to find a high-quality initialization for the BAI-MCTS algorithm by inserting several CoT-augmented examples into LLMs. As shown in Fig. \ref{LLM}, this problem is broken into five steps:
\begin{enumerate}
{\item  \textbf{Perceive network information:} This step gathers the details about available channels and the number/distribution of APs and STAs in the network.
\item  \textbf{Define allocation rules and objectives:} It establishes rules and objectives for the task to balance STA conflicts and maximize throughput following MLO.
\item  \textbf{Analyze conflict relationships:} It focuses on the CSMA protocol and STAs’ conflicts obtained by the carrier-sensing graph constructed based on (\ref{Sensing}).
\item  \textbf{Prioritize the allocation of low-conflict STAs:} This step initially assigns channels to STAs with minimal conflicts within each channel. 
\item  \textbf{Allocate resources to all STAs:} It adjusts the channel assignments by leveraging the MLO benefits and enhances throughput by considering additional network factors.}
\end{enumerate}

It can be seen from Fig.~\ref{LLM} that each step is attached with a well-designed prompt and the example explanations for ICL. 
These steps constitute an AI workflow that can handle complex tasks autonomously. 
By providing the problem descriptions and examples to the LLM, it outputs a high-quality initialization for the channel allocation problem. Then, this result is fed into the BAI-MCTS algorithm to find the $\epsilon$-optimal solution for the channel allocation problem.

\begin{algorithm}[!t]
\caption{LLM-BAI-MCTS at the Central Coordinator}
\label{LLM_BAI-MCTS_v1}
\begin{algorithmic}[1]
\State \textbf{Initialize:}  the BAI-MCTS algorithm
\State \textbf{Input:}  Problem descriptions and examples
\State \textbf{Output:} The channel allocations
\State Insert the ICL and CoT prompts into the LLM
\State Input the problem information into the LLM
\State Obtain a high-quality initialization
\State Insert this initialization into the BAI-MCTS algorithm
\State Output channel allocation results
\end{algorithmic}
\end{algorithm}

\subsection{The LLM-BAI-MCTS Algorithm}
Building upon ICL and CoT paradigms, we introduce the LLM-BAI-MCTS algorithm for the channel allocation problem in dense WiFi 7 networks, as shown in Algorithm~\ref{LLM_BAI-MCTS_v1}. 
It integrates LLMs with the BAI-MCTS algorithm to enhance efficiency and performance. The process begins with initializing the BAI-MCTS algorithm, taking problem descriptions and examples as input, and aiming to output optimal channel allocations. The key innovation is in lines 4-7, where the algorithm leverages ICL and CoT prompting techniques to extract domain knowledge from the LLM. By inserting these prompts and providing problem information to the LLM, the algorithm obtains a high-quality initialization that significantly reduces the search space for the BAI-MCTS algorithm. This initialization is then fed into the BAI-MCTS algorithm, which explores the remaining solution space to find the $\epsilon$-optimal channel allocation. 

The high-quality initialization refers to a strategic partitioning of the channel allocation problem. Specifically, the algorithm identifies $\mathfrak{L}$ STAs whose configurations are determined directly by the LLM's output, where $\mathfrak{L}$ represents the subset of STAs benefiting from LLM-assisted allocation. This selection can be prioritized based on either the LLM's confidence levels for particular allocations or specific performance requirements of certain STAs within the network. Meanwhile, for the remaining $N-\mathfrak{L}$ STAs, the algorithm employs the BAI-MCTS approach to determine optimal channel allocations. This hybrid strategy significantly reduces computational complexity by decreasing the height of MCT by $\mathfrak{L}$ layers, as these LLM-determined allocations no longer require exploration within the tree structure. 

The insight of Algorithm~\ref{LLM_BAI-MCTS_v1} lies in leveraging the LLM's capabilities to comprehend complex wireless network dynamics without explicit domain-specific training. When provided with carefully structured background information, the LLM can efficiently contextualize the channel allocation problem. By incorporating representative examples during ICL, the model extracts the underlying mathematical relationships between network topologies and their optimal allocation strategies, enabling generalization to novel configurations. The CoT technology further enhances solution quality by decomposing the complex optimization process into interpretable reasoning steps, promoting both reliability and verifiability of the generated allocations. This combination of domain knowledge and guided reasoning enables the LLM to facilitate efficient knowledge transfer between small- and large-scale scenarios by bridging the optimization-based and data-driven methods.

\section{Numerical Results}\label{SecNA}
We conduct several experiments to evaluate the proposed algorithms with different settings. We consider a scenario of a $10\times10$ m square network, as shown in Fig.~\ref{topology}, where three APs are located in the vertices of an equilateral triangle, and the STAs are uniformly distributed in this square area. In this uplink WiFi 7 network, all the STAs employ the CSMA protocol to coordinate their transmissions. 
\begin{figure}[!t]
\centerline{\includegraphics[scale=0.6]{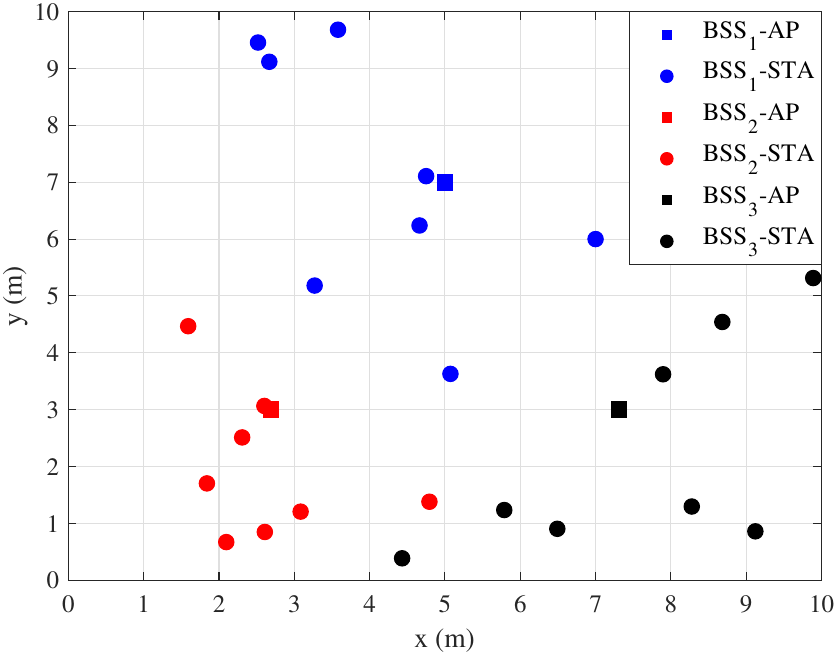}}
\caption{A simple network with three BSSs distributed in a $10\times10$ m area.}
\label{topology}
\end{figure}
\setlength{\textfloatsep}{4pt}
The access intensity \cite{jiang2009distributed} of all links in the ICN is set to $1$. The transmit power is set to $20$ dBm, and the carrier sensing threshold is $-60$ dBm. The normalized power of the background noise is $-95$ dBm. The large-scale channel fading is set to $({3\times 10^8}/{4\pi f_c\mathscr{D}^2(\text{STA}, \text{AP})})^2$, where $f_c=\{2.4, 5, 6\}$ GHz is the carrier frequency and $\mathscr{D}(\text{STA}, \text{AP})$ is the Euclidean distance between a STA and an AP. The small-scale channel fading follows a Rayleigh distribution with a $\sqrt{1/2}$ scale parameter. The available
transmission rates are $\{20, 50, 100, 150\}$ Mbps. All results are obtained from $10^3$ MC experiments.

We adopt a discrete event simulator in \cite{tong2021throughput} to simulate the CSMA protocol and calculate the network throughput. For convenience, we assume that each frequency band only has one channel. Thus, the MLO configurations for each AP-STA pair are $\{(001)$$, (010)$$, (100)$$, (011)$$, (101)$$, (110)$$, (111)\}$, corresponding to arms 1 to 7 in Fig.~\ref{arm}. For example, arm 1 is only assigned to the channel in 2.4 GHz, while arm 7 has three channels (links) in 2.4, 5, and 6 GHz, respectively. In the following, we compare the BAI-MCTS algorithm with several baselines in terms of the network throughput. 
\begin{itemize}
    \item \textbf{The UCB applied to the Tree (UCT) algorithm} selects nodes in MCTS by maximizing the estimated mean rewards using the UCB1 algorithm \cite{auer2002finite};
    \item \textbf{The DGN-MCTS algorithm} \cite{bai2013bayesian} models accumulated reward uncertainty as a Normal distribution mixture and selects nodes via Thompson sampling in MCTS;
    \item \textbf{The random selection method} randomly chooses arms with equal probabilities, disregarding estimated rewards;
    \item \textbf{The optimal value} is obtained by solving problem~\eqref{total} using the exhaustive search method.
\end{itemize}

\begin{figure}[!t]
\centerline{\includegraphics[scale=0.32]{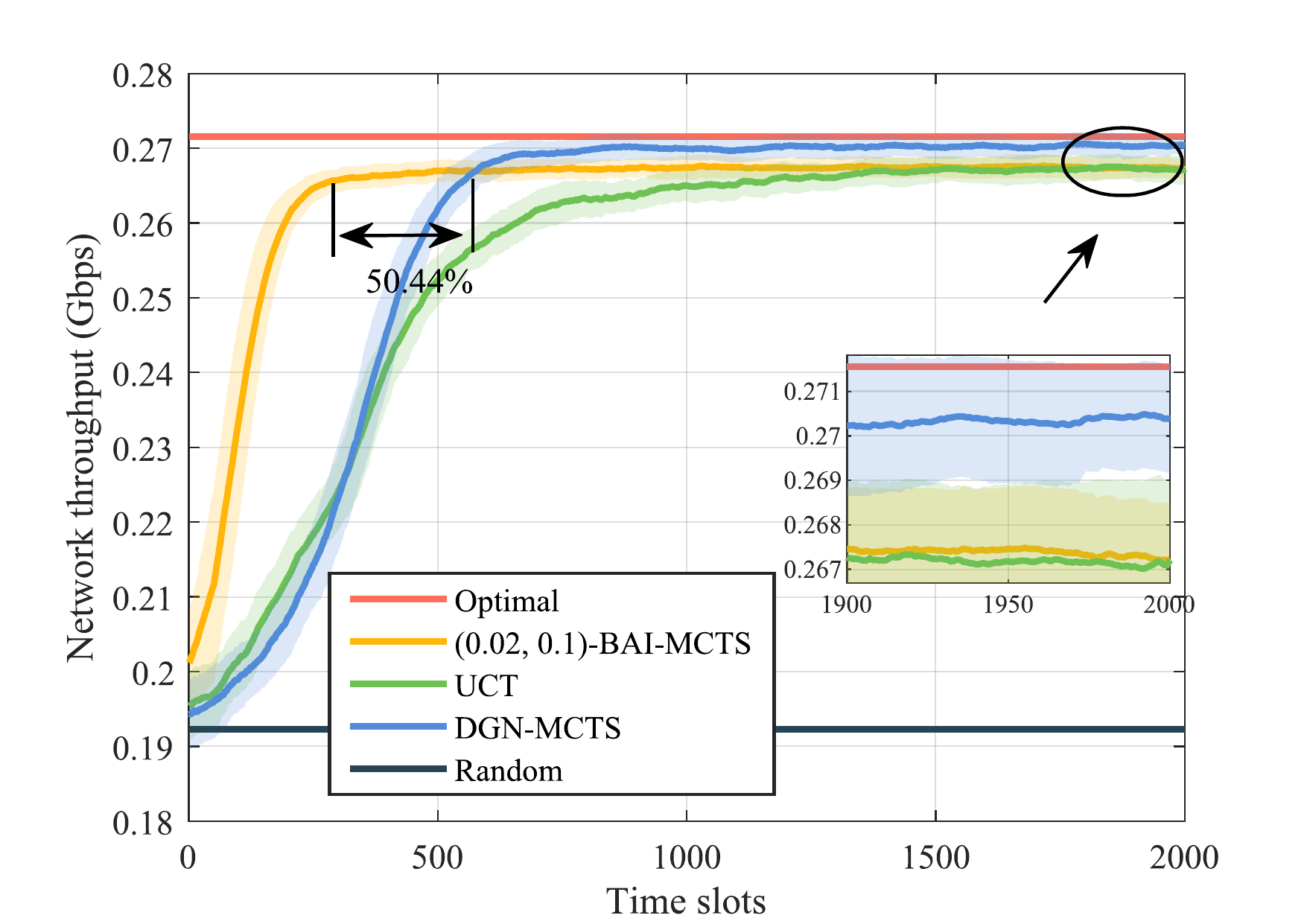}}
\caption{Average network throughput of the optimal value, the BAI-MCTS algorithm, the UCT algorithm, the DNG-MCTS algorithm, and the random selection method, where $T = 2,000$, $\epsilon = 0.02$, and $\delta=0.1$.}
\label{algorithm}
\end{figure}
Fig.~\ref{algorithm} compares the $(0.02, 0.1)$-BAI-MCTS algorithm with the UCT algorithm, the DNG-MCTS algorithm, the random selection method, and the optimal value in the scenario of $3$ APs and $6$ STAs.  We observe that all methods can converge within $1,500$ time slots except for the random selection method. This result is significant as the number of total arms is $7^6 = 117,649$. In addition, the BAI-MCTS algorithm achieves the fastest convergence rate among all algorithms thanks to the BAI technique. Notably, the convergence rate of the BAI-MCTS algorithm is $50.44\%$ faster than the DNG-MCTS algorithm when reaching $98\%$ of the optimal value. Furthermore, the performance gap between the BAI-MCTS algorithm and the optimal value is below $\epsilon$ (i.e., $0.02$), validating the theoretical ($\epsilon$, $\delta$)-optimality guarantee.

\begin{figure}[!t]
\centerline{\includegraphics[scale=0.54]{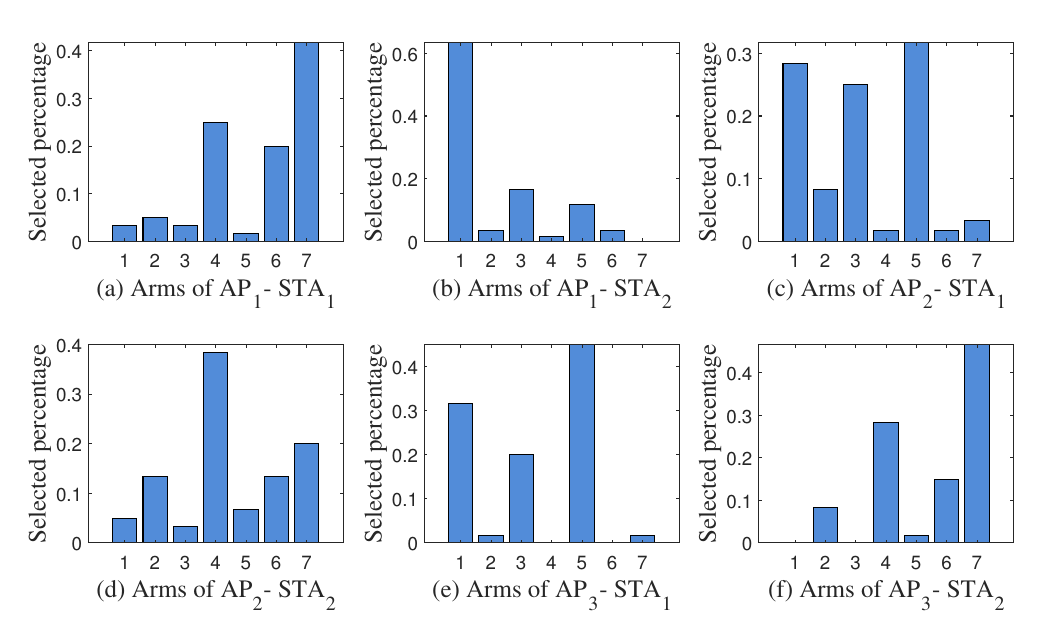}}
\caption{ The selected percentage of each arm of each AP-STA pair in the network by running the BAI-MCTS algorithm.}
\label{arm}
\end{figure}
Fig.~\ref{arm} shows the selected percentage of each arm of each AP-STA pair using the BAI-MCTS algorithm in the scenario of $3$ APs and $6$ STAs. The optimal configuration, determined by the exhaustive search method, is $\{7, 1, 6, 1, 3, 4\}$. From Figs.~\ref{arm}a-\ref{arm}f, we observe that BAI-MCTS tends to select the strategy $\{7, 1, 5, 4, 5, 7\}$, which, while different, closely aligns with the optimal policy. This is because the BAI algorithm aims to identify an $\epsilon$-optimal arm, ensuring a throughput close to the optimal value. This is also observed in Fig.~\ref{algorithm}.

\begin{figure}[!t]
\centerline{\includegraphics[scale=0.63]{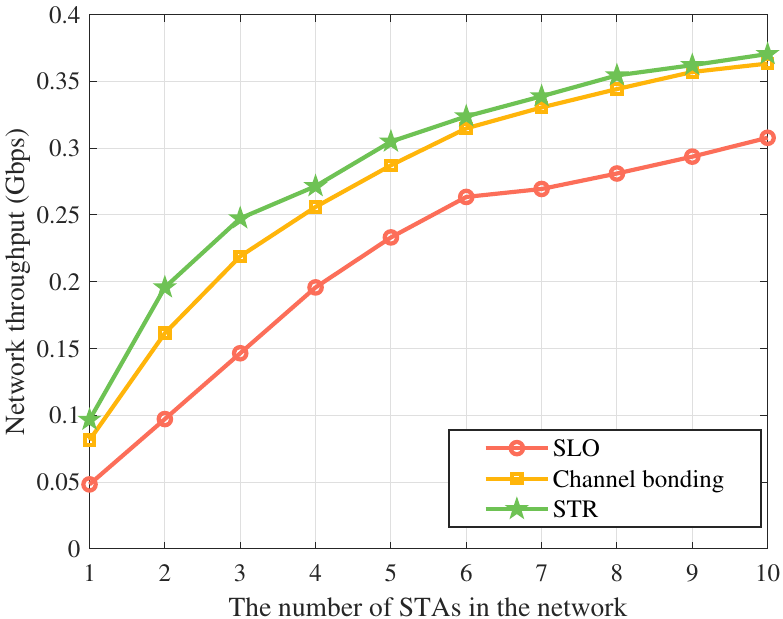}}
\caption{ Network throughput versus the number of STAs in the network of the BAI-MCTS algorithm for $1,000$ random network typologies, where $T = 2,000$, $\epsilon = 0.02$, and $\delta=0.1$.}
\label{mechanism}
\end{figure}

Fig.~\ref{mechanism} shows the performance of the $(0.02, 0.1)$-BAI-MCTS algorithm under varying transmission modes, i.e., SLO, channel bonding, and STR, as the number of STAs increases. The available resources are two 20 MHz channels in the 5 and 6 GHz bands, respectively. The AP-STA pair in SLO mode can select only one channel, while channel bonding can dynamically aggregate two channels. In addition, the STR mode allows simultaneous transmission in both 5 GHz and 6 GHz bands. From Fig.~\ref{mechanism}, we see that STR  achieves the highest network throughput among the three modes, demonstrating its efficiency in utilizing channel resources. As the number of STAs increases, the performance gap between the STR and channel bonding modes narrows due to increased competition, reflecting realistic network behavior.

\begin{figure}[!t]
\centerline{\includegraphics[scale=0.32]{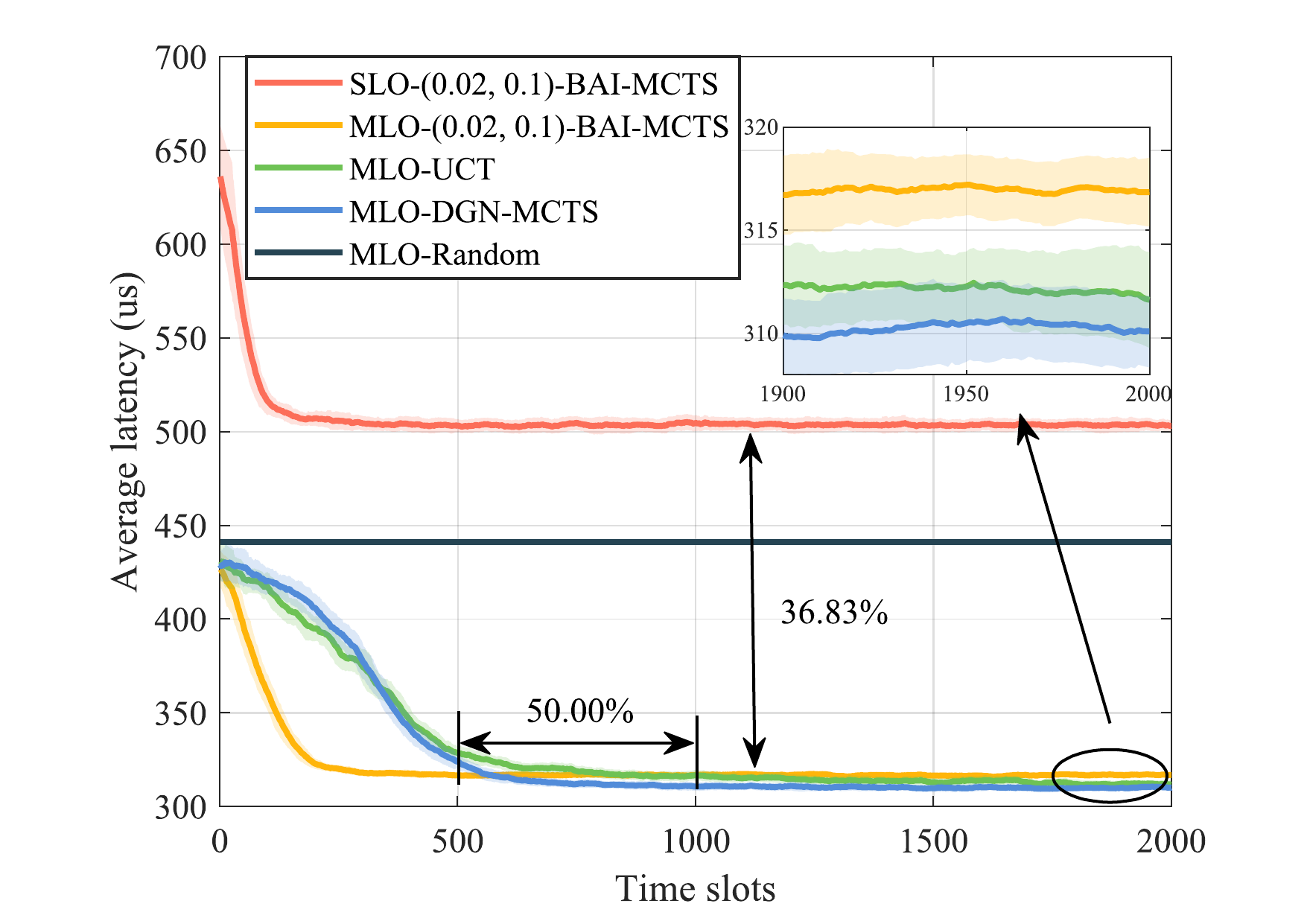}}
\caption{The average latency of the BAI-MCTS algorithm, the UCT algorithm, the DNG-MCTS algorithm, and the random selection method under the STR mode, and the BAI-MCTS algorithm under the SLO mode, where $T = 2,000$, $\epsilon = 0.02$, and $\delta=0.1$.}
\label{algorithm_L}
\end{figure}

In addition to throughput improvements, STR also achieves a significant performance in reducing latency, which is defined as the transmission time required for a packet to traverse from generation at the STA to successful reception at the AP.
Fig.~\ref{algorithm_L} depicts the average latency of different algorithms under the SLO and STR modes.
We see that the BAI-MCTS algorithm under the STR mode is better than that of the SLO mode, achieving a 36.83$\%$ reduction in latency.
This result underscores STR's superiority in optimizing delay-sensitive applications. In addition, the BAI-MCTS algorithm achieves a 50.00$\%$ faster convergence rate than the DGN-MCTS algorithm while maintaining a final latency disparity below $10$ us. This indicates that the BAI-MCTS algorithm can meet the practical latency requirements.

\begin{figure}[!t]
\centerline{\includegraphics[scale=0.29]{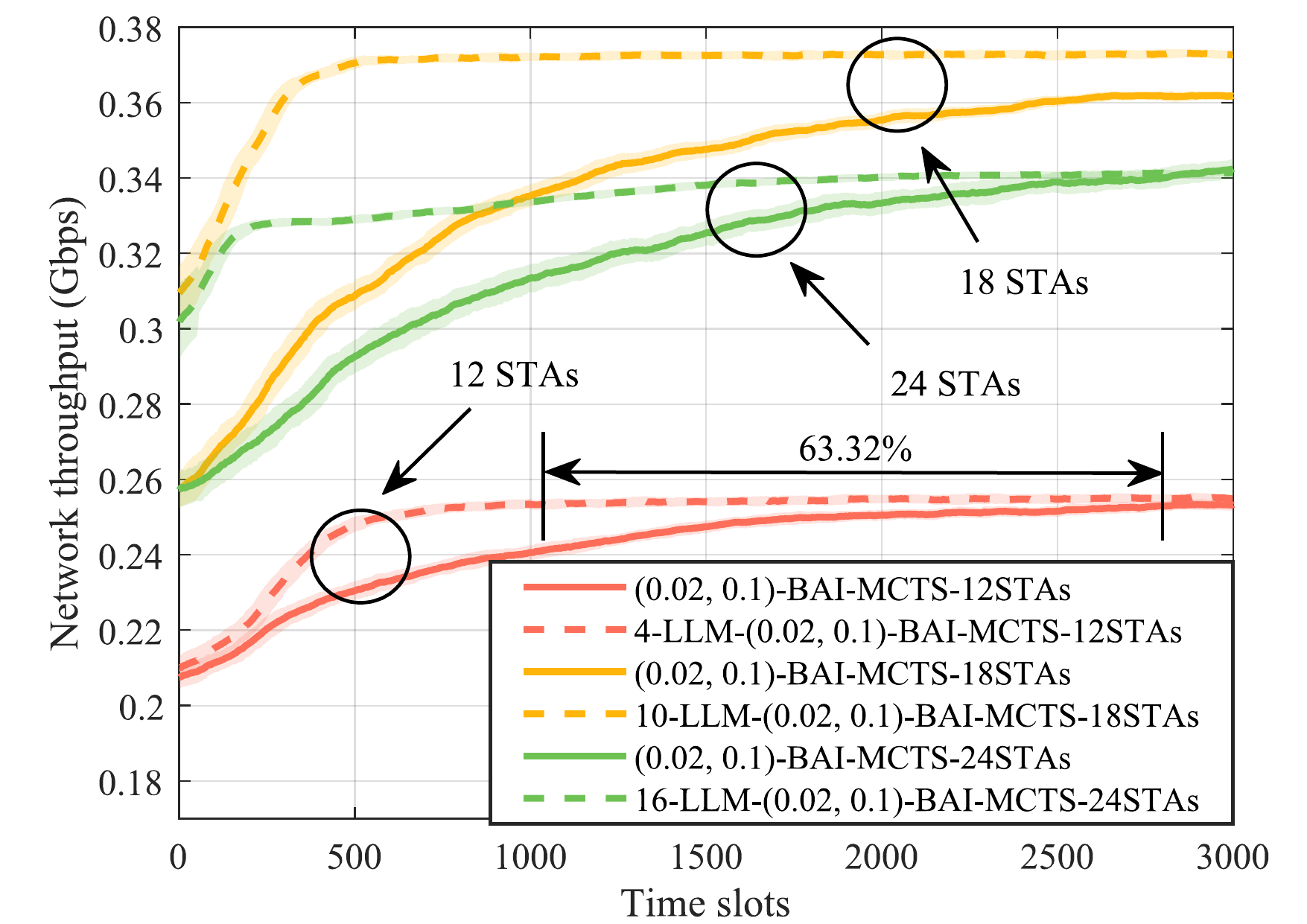}}
\caption{The average network throughput of the LLM-BAI-MCTS and BAI-MCTS algorithms when $\mathfrak{L} = N-8$, where $T = 3,000$ and $N_m = \{4,6,8\}$ for $m  \in \mathcal{M}$. The example provided in ICL is a network with 6 STAs and its channel allocation strategy.}
\label{STA12-24}
\end{figure}

Fig.~\ref{STA12-24} compares the performance of the BAI-MCTS and LLM-BAI-MCTS algorithms in denser networks with $N=\{12, 18, 24\}$ STAs. For the LLM-BAI-MCTS algorithm, we retain $\mathfrak{L} = N-8$ STAs allocated by the LLM. Meanwhile, the remaining 8 STAs are configured through the BAI-MCTS algorithm. We adopt the GPT-4o model for testing.
For ICL, we consider a scenario of $3$ APs and $6$ STAs, and the corresponding allocation strategy is obtained by solving problem \eqref{total}. 
From Fig.~\ref{STA12-24}, we see that the average network throughput of the LLM-BAI-MCTS algorithm is better than that of the BAI-MCTS algorithm across all scenarios. In addition, the BAI-MCTS algorithm requires more time slots to converge compared with the LLM-BAI-MCTS algorithm, resulting in high sample complexity. Compared with BAI-MCTS, the LLM-BAI-MCTS algorithm achieves a $63.32\%$ improvement in convergence rate in a network with $12$ STAs, and a more significant improvement in networks with $18$ and $24$ STAs. Thanks to the high-quality initialization, the sample complexity of the LLM-BAI-MCTS algorithm is not sensitive to the number of STAs. We also observe that the network throughput of $N=24$ is lower than that of $N=18$. This is because the denser the network, the more contention.

\begin{figure}[!t]
\centerline{\includegraphics[scale=0.31]{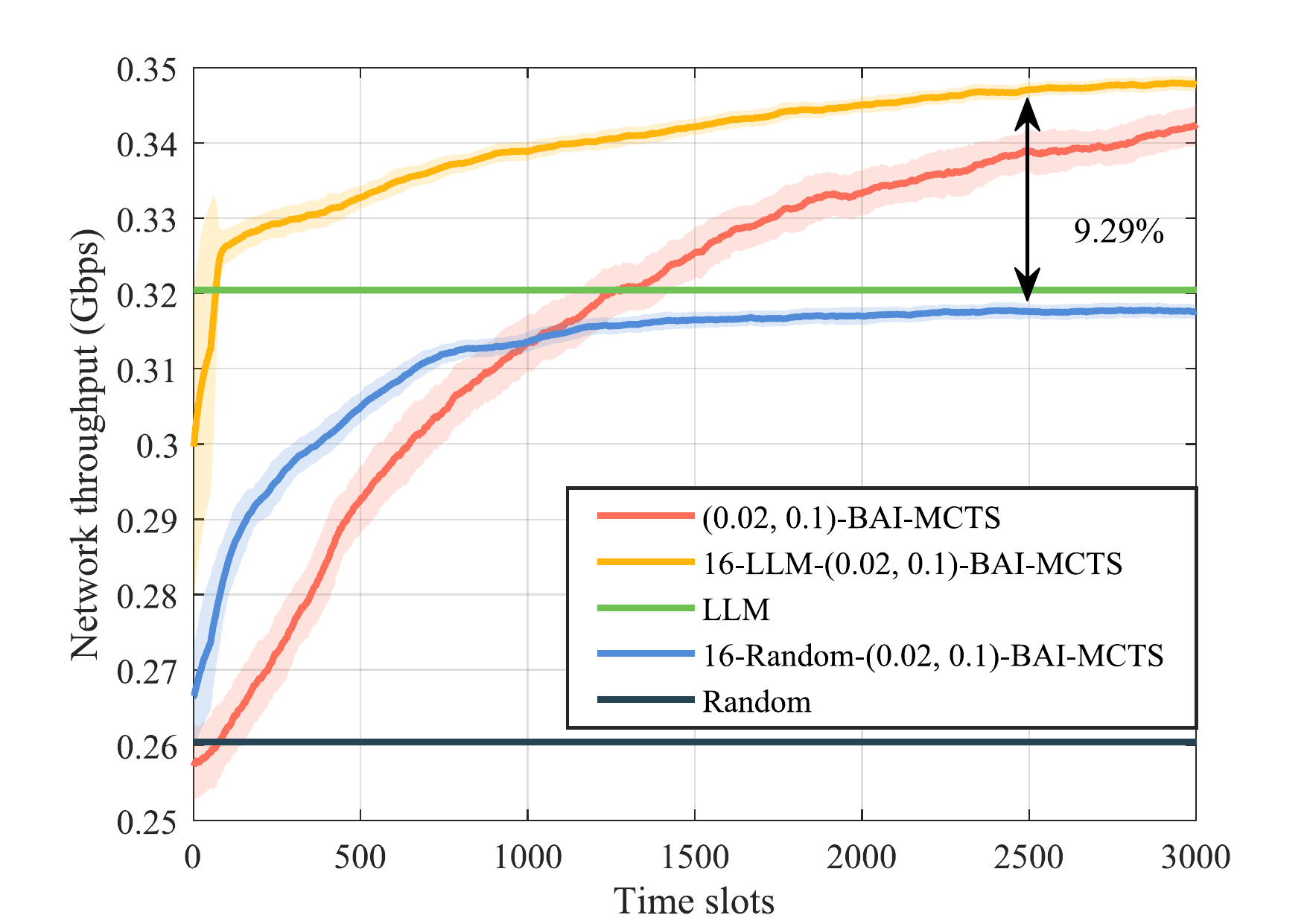}}
\caption{The average network throughput of different algorithms when $\mathfrak{L} = \{0,16,24\}$, where $T = 3,000$ and $N_m = 8$ for $m  \in \mathcal{M}$. The example provided in ICL is a network with $6$ STAs and its channel allocation strategy.}
\label{improvedalgorithm}
\end{figure}
We further investigate the proposed algorithms in a dense WiFi 7 network with $N = 24$ by comparing them against different algorithms. It is observed that the BAI-MCTS algorithm with 16 STAs using LLM exhibits a faster convergence rate compared to the pure BAI-MCTS algorithm. Additionally, the performance of the BAI-MCTS algorithm surpasses that of the method where all STAs use the LLM's output (i.e., $\mathfrak{L} = 24$). This suggests that relying solely on LLM is insufficient to effectively solve complex channel allocation problems. Notably, the performance of the BAI-MCTS algorithm with 16 STAs using LLM is approximately $9.29\%$ better than using random selection. These observations imply the superiority of integrating the LLM and the BAI-MCTS algorithm.
\begin{table}[!t] 
  \centering 
  \renewcommand\arraystretch{1.25}
  \caption{Average network throughput (Gbps) of the LLM-BAI-MCTS algorithm using different LLMs} 
  \label{LLMs} 
  \resizebox{0.40\textwidth}{!}{  
   \begin{tabular}{cccc}
    \hline
    \noalign{\hrule height 0.6pt} 
    LLMs & $\mathfrak{L} = 8$ & $\mathfrak{L} = 16$ & $\mathfrak{L} = 24$ \\
    \hline
    GPT-4o & \textbf{0.3337} & \textbf{0.3414} & \textbf{0.3205} \\
    Claude-3-Sonnet & 0.3324 & 0.3313 & 0.3202 \\
    Gemini-1.5-Pro & 0.3262 & 0.3287 & 0.3127 \\
    \hline
    \noalign{\hrule height 0.6pt} 
  \end{tabular}
  }
\end{table}

Finally, we evaluate the LLM-BAI-MCTS algorithm using different LLMs and parameter $\mathfrak{L}$. The considered LLMs are the GPT-4o, Claude-3-Sonnet, and Gemini-1.5-Pro models. The number of LLM-assisted STAs is set to $\mathfrak{L} = 8$, $16$, and $24$. The configurations are the same as those in Fig.~\ref{improvedalgorithm}. The average network throughput achieved by LLM-BAI-MCTS is summarized in Table~\ref{LLMs}. 
From the results, we observe that GPT-4o outperforms the other models in all scenarios, achieving the highest average network throughput. Notably, all three LLMs can address the task effectively, demonstrating that the LLM-BAI-MCTS algorithm possesses strong generalization ability and performs well in dense networks. Furthermore, the performance of different LLMs exhibits varying trends as $\mathfrak{L}$ changes, highlighting the tradeoff in selecting the optimal value of $\mathfrak{L}$. This observation reinforces the importance of balancing generalizability and performance through $\mathfrak{L}$.

\section{Conclusions and Discussions}\label{SecCon}
This paper studied the dynamic channel allocation problem in dense WiFi 7 networks with MLO by formulating it as an MAB problem. We proposed the BAI-MCTS algorithm to achieve efficient exploration-exploitation tradeoffs and derived theoretical guarantees on its performance. To further reduce sample complexity and enhance generalizability, we introduced the LLM-BAI-MCTS algorithm, which integrates LLMs through ICL and CoT techniques, providing high-quality initializations for faster convergence. Numerical results demonstrated that the BAI-MCTS algorithm improves convergence by $50.44\%$ over state-of-the-art methods, while the LLM-BAI-MCTS algorithm achieves an additional $63.32\%$ improvement in dense networks, showcasing strong scalability and adaptability. 

There are several limitations and future research directions in this work. First, this work focuses primarily on the STR mode. An interesting yet challenging problem is to extend the channel allocation problem to the NSTR mode or hybrid networks, which require sophisticated temporal alignment strategies and complex coordination mechanisms.  
Second, maximizing overall network throughput in problem \eqref{total} may result in resource allocation imbalance. 
An interesting problem is to explore alternative objective functions by modifying the objectives in the problem formulation, which would enhance practical applicability. Third, while the LLM-BAI-MCTS algorithm leverages general LLMs, further exploration of fine-tuning these models on domain-specific data could substantially improve performance, potentially reducing convergence time, especially in highly dynamic environments.
Last but not least, further expanding evaluations to real-world data or scenarios can effectively narrow the performance gap between simulation results and real-world outcomes.

{\appendices
\section{Proof of Theorem 1}\label{convergence}
At each layer of the MCT, the BAI-MCTS algorithm can be regarded as the $\text{EB-TC}_{\epsilon}$ algorithm. The heuristic properties of the MCTS algorithm can accelerate the exploration process in huge arm space. Since the BAI-MCTS algorithm performs the stopping rule at each layer, the upper bound on its sampling complexity can be obtained by summing up that at all layers. As a result, we can follow the layered structure of the BAI-MCTS algorithm to prove Theorem~\ref{Theorem1}. 

First, we derive an upper bound on the sampling complexity of the $(N-1)$-th layer. We assume that the algorithm has converged in the lower layers. In addition,  we consider that the nodes' rewards follow independent Gaussian distributions with values bounded within the interval [0, 1] (normalized network throughput) so that we can apply Theorem 1 in \cite{jourdan2023varepsilon}. The upper bound on the asymptotic expected sampling complexity for achieving ($\epsilon',\delta'$)-PAC in terms of its $|{\mathcal{L}}|$ child nodes is  
\begin{equation}\label{tau}  
\begin{aligned}  
\lim\sup_{\delta'\to0}\frac{\mathbb{E}_{\boldsymbol{\nu}}[\tau_{\epsilon',\delta'}^{N-1}(q)]}{\log(1/\delta')} \leq T_{\epsilon'}^{N-1}(\boldsymbol{\mu}_{q^{N-1}}),
\end{aligned}  
\end{equation} 
where $q=1,\ldots, |{\mathcal{L}}|^{N-1}$, and $\tau_{\epsilon',\delta'}^{N-1}(q)$ is the convergence time for the $q$-th node at the $(N-1)$-th layer.

For Gaussian bandits, the characteristic times can be obtained by solving a simpler optimization problem \cite{jourdan2023varepsilon}. Assume that $d^*(\boldsymbol{\mu})=\{d^*\}$. Let $r(\boldsymbol{\mu})$ be the solution of $\psi_{\boldsymbol{\mu},\epsilon}(r)=0$, for all $r\in(1/\min_{d\neq d^*}(\mu_{d^*}-\mu_d+\epsilon)^2,+\infty)$. Then, we have
\begin{equation}\psi_{\boldsymbol{\mu},\epsilon}(r)=\sum_{d\neq d^*}\frac1{(r(\mu_d^*-\mu_d+\epsilon)^2-1)^2}-1,
\end{equation}
and
\begin{equation}
T_\epsilon(\boldsymbol{\mu})=\frac{2r(\boldsymbol{\mu})}{1+\sum_{d\neq d^*}\frac1{r(\boldsymbol{\mu})(\mu_d^*-\mu_d+\epsilon)^2-1}},
\end{equation}
where $\psi_{\boldsymbol{\mu},\epsilon}(r)$ is a convex and decreasing function.

Next, we generalize the result \eqref{tau} to all layers. A similar proof is conducted from layer $h$ to $h-1$ for all $1 \leq h \leq {N-1}$. We use Lemma 23 from \cite{jourdan2023varepsilon}, which states:
 
\textit{Let $\gamma > 0$. There exists $T_1(\boldsymbol{\mu})$ with $\mathbb{E}_{\boldsymbol{\nu}}[T_1] < +\infty$ such that for all $t \geq T_1(\boldsymbol{\mu})$ and all $d \neq d^*$,}
\begin{equation}
\frac{\mathbb{N}_{t,d}}{\mathbb{N}_{t,d^*}} \leq \frac{\boldsymbol{w}^*(d)}{\boldsymbol{w}^*(d^*)} + \gamma,
\end{equation}
\textit{where $\boldsymbol{w}^{*}(d)$ is the optimal allocation proportion for node $d$}.
Using this lemma, we can establish that, for sufficiently large $t$, the reward of the node of layer $(h-1)$ can be approximated by weighting the rewards of its child nodes, which follow independently and identically Gaussian distributions, using the corresponding optimal allocation vector $\boldsymbol{w}^*$.
Therefore, the rewards of nodes at layer $h-1$ also follow Gaussian distributions and are bounded within the interval [0, 1]. This allows us to apply the result \eqref{tau} recursively to all layers.

Based on this layer-by-layer convergence analysis of the BAI-MCTS algorithm, we ensure that each layer is ($\epsilon',\delta'$)-PAC. Let $\epsilon' = {\epsilon}/{N}$ and $\delta' = 1 - \sqrt[N]{1 - \delta}$. After all layers converge, there is a probability greater than $1 - \delta$ to select an $\epsilon$-optimal arm. Here, $1 - \delta$ is obtained through $\left(1 - (1 - \sqrt[N]{1 - \delta})\right)^{N}$, and $\epsilon$ is derived from $\left({\epsilon}/{N} \times N\right)$. At this point, we derive the upper bound on the total sampling complexity of the BAI-MCTS algorithm, which is given by 
	\begin{equation}
		\begin{aligned}
  &\lim\sup_{\delta\to0}\frac{\mathbb{E}_{{\boldsymbol{\nu}}}[\tau_{\epsilon,\delta}]}{\log(1/\delta)}\leq \sum_{h=0}^{N-1}{\lim\sup_{\delta\to0}\frac{\mathbb{E}_{{\boldsymbol{\nu}}}\left[\tau_{\frac{\epsilon}{N},1-\sqrt[N]{1-\delta}}^{h}\right]}{\log(1/(1-\sqrt[N]{1-\delta}))}}\\
   &\leq {\sum_{h=0}^{N-1}\max_{q=1,\ldots,|{\mathcal{L}}|^{h}} T_\frac{\epsilon}{N}^{h}(\boldsymbol{\mu}_{q^h})}
   \leq \sum_{h=0}^{N-1}{\hat T_\frac{\epsilon}{N}^{h}}(\boldsymbol{\mu}^{h}),
		\end{aligned}
	\end{equation}
where $\tau_{\epsilon,\delta}^h$ is the convergence time at layer $h$.

This concludes the proof.
$\hfill\blacksquare$

\section{Proof of Theorem 2}\label{error}
In Algorithm \ref{BAI-MCTS}, the indicator $\eta$ records the number of convergence layers. Hence, the layers whose numbers are at or below $\eta$ are converged, while those above $\eta$ are not. This motivates us to divide the proof into two parts. 

First, we consider the layers that have not yet converged. We define some key concepts and notations as in Theorem 6  \cite{jourdan2023varepsilon}. The number of different reward means is defined as $C_{\boldsymbol{\mu}}:=|\{\mu_d\mid d\in \mathcal{A}(\boldsymbol{\mu})\}|$ in terms of $\boldsymbol{\mu}$. The
arms with the same suboptimal gap are defined as $\mathscr{C}_{\boldsymbol{\mu}}(d):=\{d\in \mathcal{A}(\boldsymbol{\mu})\mid\mu_{d^*}-\mu_d=\Delta_d\}$, which is ordered in ascend order  $0=\Delta_1<\Delta_2<\ldots<\Delta_{C_{\boldsymbol{\mu}}}\:$.
In particular, we have $\mathscr{C}_{\boldsymbol{\mu}}(1)=d^*(\boldsymbol{\mu})$ and $\Delta_{C_{\boldsymbol{\mu}}}=\Delta_\mathrm{max}$. Then, we can obtain the following result.

\textit{
Let $\epsilon_0 > 0$. For all $\epsilon \geq 0$, if $\epsilon \in [\Delta_d, \Delta_{d+1})$ and $d \in [C_{\boldsymbol{\mu}} - 1]$, let $d_{\boldsymbol{\mu}}(\epsilon) = d$; Otherwise, $d_{\boldsymbol{\mu}}(\epsilon) = C_{\boldsymbol{\mu}}$. For all $d \in [C_{\boldsymbol{\mu}} - 1]$, let $C_{d}(\epsilon_{0})=2\Delta_{d}^{-1}-\epsilon_{0}^{-1}$ and $C_{d,j}(\epsilon_{0})=2\frac{\Delta_{j}/\epsilon_{0}+1}{\Delta_{d}-\Delta_{j}}+3\epsilon_{0}^{-1}$. For all $d \in [C_{\boldsymbol{\mu}} - 1]$, let $H_d(\boldsymbol{\mu}, \epsilon_0) := \min_{j \in [d]} \max \{\bar{H}_{d,j}(\boldsymbol{\mu},\epsilon_0), \tilde{H}_{d,j}(\boldsymbol{\mu},\epsilon_0)\}$, where}
\begin{equation}
\begin{aligned}
&\bar{H}_{d,j}(\boldsymbol{\mu},\epsilon_{0}) :=|d^*(\boldsymbol{\mu})|\max\left\{\sqrt{2}\Delta_{j+1}^{-1}, C_{d+1,j}(\epsilon_0)\right\}^2+ \\
&\max\left\{C_{j+1}(\epsilon_0), C_{d+1,j}(\epsilon_0)\right\}^2 ( \sum_{k=2}^j |\mathscr{C}_{\boldsymbol{\mu}}(k)| + \sum_{k=d+1}^{C_{\boldsymbol{\mu}}} |\mathscr{C}_{\boldsymbol{\mu}}(k)| ) \\
&+\sum_{k=j+1}^d|\mathscr{C}_{\boldsymbol{\mu}}(k)|\max\left\{C_{j+1}(\epsilon_0), C_{d+1,j}(\epsilon_0),\sqrt{2}\Delta_k^{-1}\right\}^2 , \\
\end{aligned} 
\end{equation}
\begin{equation}
\begin{aligned}
\tilde{H}_{d,j}(\boldsymbol{\mu},\epsilon_0) :=& \frac{2|d^*(\boldsymbol{\mu})|}{\Delta_{j+1}^2}+\frac{2\sum_{k=1}^j|\mathscr{C}_{\boldsymbol{\mu}}(k)|}{(\Delta_{d+1}-\Delta_j)^2}+\\
&\sum_{k=2}^j|\mathscr{C}_{\boldsymbol{\mu}}(k)|\max\left\{C_{j+1}(\epsilon_0),\epsilon_0^{-1}\right\}^2+ \\
&\sum_{k=j+1}^{C_{\boldsymbol{\mu}}}|\mathcal{C}_{\boldsymbol{\mu}}(k)|\max\{C_{j+1}(\epsilon_{0}),\epsilon_{0}^{-1},\sqrt{2}\Delta_{k}^{-1}\}^{2}.
\end{aligned} 
\end{equation}
\textit{For the EB-TC$_{\epsilon_0}$ algorithm with a fixed proportion $\beta = 1/2$ and $\epsilon\geq0$, when the algorithm has not stopped yet, for ${\boldsymbol{\nu}} \sim\mathfrak{N}_{{|\mathcal{L}|}}$ and $t \geq D_{\boldsymbol{\mu}}$, it holds that
\begin{equation}
\begin{aligned}
\mathbb{P}_{{\boldsymbol{\nu}}}\left(\hat{\mathcal{I}}_t\notin\mathcal{A}_{\epsilon}(\boldsymbol{\mu})\right)\leq&\mathds{1}\left(\epsilon<\Delta_{\max}\right)\frac{|\mathcal{L}|(|\mathcal{L}|+1)}{2}e^{2}\\
&(2+\log t)^{2}p\left(\frac{t-5|\mathcal{L}|^{2}/2}{8H_{d_{\boldsymbol{\mu}(\epsilon)}}(\boldsymbol{\mu},\epsilon_{0})}\right),
\end{aligned} 
\end{equation}
where $p(x) = xe^{-x}$ and $D_{\boldsymbol{\mu}}=8H_1(\boldsymbol{\mu},\epsilon_0)
h_2(8H_1(\boldsymbol{\mu},\epsilon_0),$\\
$5|\mathcal{L}|^2/2,2+\log\left(|\mathcal{L}|(|\mathcal{L}|+1)/2)\right)+5|\mathcal{L}|^2/2$ with $h_2(x,y,z)=\inf\left\{u\mid u-\log u-2\log\left(2+\log(xu+y)\right)\geq z\right\}.$}

Consequently, for layer $\eta$, there is a probability greater than $Q(|\mathcal{L}|,\mathbb{N}_{t,d_{t,\eta}},{\boldsymbol{\mu}}_{d_{t,\eta}},{\epsilon}/{N})$ to choose an ${\epsilon}/{N}$-optimal arm when $\mathbb{N}_{t,d_{t,\eta}} \geq D_{{\boldsymbol{\mu}}_{d_{t,\eta}}}$ and $\epsilon_0=\epsilon$, where 
\begin{equation}\label{Q}
Q(K,t,\boldsymbol{\mu},\epsilon)= \frac{K(K+1)}{2}e^{2}(2+\log t)^{2}p\left(\frac{t-5K^{2}/2}{8H_{d_{\boldsymbol{\mu}(\epsilon)}}(\boldsymbol{\mu},\epsilon)}\right).
\end{equation}
By leveraging Lemma 23 in \cite{jourdan2023varepsilon}, we further extend this analysis to subsequent layers. When $\mathbb{N}_{t,d_{t,h}^*} \geq \max\{D_{{\boldsymbol{\mu}}_{d_{t,h}^*}}, T_1({{\boldsymbol{\mu}}_{d_{t,h}^*}})\}$, $\forall h=\eta+1,\ldots, N-1$, we have a probability greater than
\begin{equation}\label{with_w}
\begin{aligned}
\prod_{h=\eta+1}^{N-1} \boldsymbol{w}_{d_{t,h-1}^*}^*(d_{t,h}^*) \left( 1 - Q\left( |\mathcal{L}|, \mathbb{N}_{t,d_{t,h}^*}, \boldsymbol{\mu}_{d_{t,h}^*}, \frac{\epsilon}{N} \right) \right) 
\end{aligned}
\end{equation}
to find an ${\epsilon}/{N}$-optimal arm, where ${d_{t,h}^*}$ is the best arm at layer $h$ following the converged nodes at time slot $t$. Furthermore, it can be  scaled as (Lemma 20 in \cite{jourdan2023varepsilon})
\begin{equation}
\begin{aligned}
&\prod_{h=\eta+1}^{\mathrm{N}-1} \frac{1}{4\sqrt{2(|\mathcal{L}|-1)}} \left( 1 - Q\left( |\mathcal{L}|, \mathbb{N}_{t,d_{t,h}^*}, \boldsymbol{\mu}_{d_{t,h}^*}, \frac{\epsilon}{N} \right) \right) .
\end{aligned}
\end{equation}

Second, we analyze the converged layers. According to the properties of the BAI-MCTS algorithm in Theorem~\ref{Theorem1}, we know that there is a probability greater than 
\begin{equation}
(1-(1-\sqrt[N]{1-\delta}))^{\eta} =  (1-\delta)^{\frac{\eta}{N}}
\end{equation}
for all the converged layers to find the ${\epsilon}/{N}$-optimal arms.

Finally, we integrate the results from both the non-converged and converged layers to derive a comprehensive bound. At time slot $t$, for ${\boldsymbol{\nu}} \sim \mathfrak{N}_{{|\mathcal{L}|}^{N}}$, $\mathbb{N}_{t,d_{t,\eta}} \geq D_{{\boldsymbol{\mu}}_{d_{t,\eta}}}$, and $\mathbb{N}_{t,d_{t,h}^*} \geq \max\{D_{{\boldsymbol{\mu}}_{d_{t,h}^*}}, T_1({{\boldsymbol{\mu}}_{d_{t,h}^*}})\},\forall h=\eta+1,\ldots, N-1$, if the BAI-MCTS algorithm with a fixed proportion $\beta = 1/2$ has not stopped yet, then $\forall\epsilon\geq0$,
\begin{equation}
\begin{aligned}
&\mathbb{P}_{\boldsymbol{\nu}}\left(\hat{\mathcal{I}}_t \notin \mathcal{A}_{\epsilon}(\boldsymbol{\mu})\right) \leq \mathds{1}\left(\epsilon < \Delta_{\mathrm{max}}\right)\left(1 - (1-\delta)^{\frac{\eta}{N}} \right) \\
&\times \prod_{h=\eta+1}^{N-1} \frac{1}{4\sqrt{2(|\mathcal{L}| - 1)}} \left(1 - Q\left(|\mathcal{L}|, \mathbb{N}_{t, d_{t, h}^*}, \boldsymbol{\mu}_{d_{t, h}^*}, \frac{\epsilon}{N}\right)\right) \\
&\times \left(1 - Q\left(|\mathcal{L}|, \mathbb{N}_{t, d_{t, \eta}}, \boldsymbol{\mu}_{d_{t, \eta}}, \frac{\epsilon}{N}\right)\right).
\end{aligned}
\end{equation}

This concludes the proof.
$\hfill\blacksquare$}

\bibliography{Ref_MLO}
\bibliographystyle{IEEEtran}

\end{document}